\documentclass[runningheads]{llncs}
\usepackage{graphicx}
\usepackage{times}
\usepackage{xcolor}
\usepackage{soul}
\usepackage[utf8]{inputenc}
\usepackage[small]{caption}
\usepackage{footnote}
\usepackage{amsmath}
\usepackage{booktabs}
\usepackage{multirow}
\usepackage{graphicx}
\usepackage{caption}
\usepackage{array}
\usepackage{enumitem}
\usepackage{nicefrac}

\usepackage{amsthm}

\usepackage{hyperref}

\usepackage[ruled,linesnumbered,vlined]{algorithm2e}

\usepackage{etoolbox}

\urlstyle{same}


\BeforeBeginEnvironment{figure}{\vskip-2.5ex}
\AfterEndEnvironment{figure}{\vskip-2.5ex}

\BeforeBeginEnvironment{table}{\vskip-3ex}
\AfterEndEnvironment{table}{\vskip-2.5ex}

\newtheorem{Example}{Example}
\newtheorem{Proposition}{Proposition}
\newtheorem{Definition}{Definition}

\newtheorem{Theorem}{Theorem}

%

\begin{document}
\title{Approximating Voting Rules from Truncated Ballots}
%

%
\authorrunning{Ayadi \textit{et al.}}

 \author{Manel Ayadi\inst{1,2} \and
 Nahla Ben Amor\inst{1} \and
 Jérôme Lang \inst{2}}
 \institute{LARODEC, Institut Sup\'erieur de Gestion, Universit\'e de Tunis, Tunis, Tunisia
 \and
 LAMSADE, CNRS, Universit\'e Paris-Dauphine, Paris, France}

\maketitle              
\begin{abstract}
Classical voting rules assume that  ballots are complete preference orders over candidates. However, when the number of candidates is large enough, it is too costly to ask the voters to rank all candidates. We suggest to fix a rank $k$, to ask all voters to specify their best $k$ candidates, and then to consider ``top-$k$ approximations" of rules, which take only into account the $top$-$k$ candidates of each ballot. 
  We consider two measures of the quality of the approximation: the probability of selecting the same winner as the original rule, and the score ratio. We do a worst-case study (for the latter measure only), and for both measures, an average-case study and a study from real data sets. 

\keywords{Voting rules \and Truncated ballots \and Approximations.}
\end{abstract}

\section{Introduction}
The input of a voting rule is usually a collection of complete rankings over candidates (although there are exceptions, such as approval voting).
However, requiring a voter to provide a complete ranking over the whole set of candidates can be difficult and costly in terms of time and cognitive effort.
We suggest to ask voters to report only their $top$-$k$ candidates, for some (small) fixed value of $k$ (the obtained ballots are then said to be {\em top-$k$}). Not only it saves communication effort, but it is also often easier for a voter to find out the top part of their preference relation 
than the bottom part. However, this raises the issue of how usual voting rules should be adapted to top-$k$ ballots. 
Reporting top-{\em $k$} ballots is a specific form of {\em voting with incomplete preferences}, and is highly related to
\textit{vote elicitation}. Work on these topics is reviewed in the recent handbook chapter \cite{BoutilierR16}.
Existing work on truncated ballots can be classified into two classes according to the type of interaction with the voters:

\paragraph{\bf (i) Interactive elicitation } ~~~~~~~

An interactive elicitation protocol asks voters to expand their truncated ballots in an incremental way, until the outcome of the vote is eventually determined. This line of research starts with  Kalech et al. \cite{kalech2011practical} who start by top-1 ballots, then top-2, etc., until there is sufficient information for knowing the winner. Lu and Boutilier \cite{lu2011vote,lu2011robust} propose an incremental elicitation process using \textit{minimax regret} to predict the correct winner given partial information.
A more general incremental elicitation framework, with more types of elicitation questions, is cost-effective elicitation \cite{ZhaoLWKMSX18}.
Naamani Dery {\em et al.} \cite{dery2014reaching} present two elicitation algorithms for finding a winner with little communication between voters.

\paragraph{\bf (ii) Non-interactive elicitation} ~~~~~~~

The central authority elicits the top-$k$ ballots at once, for a fixed value of $k$, and outputs a winner
without requiring voters to provide extra information. A possibility consists in computing possible winners given these truncated ballots: this is the path followed by Baumeister et al. \cite{baumeister2012campaigns} (who also consider double-truncated ballots where each voter ranks some of her top and bottom candidates). Another possibility -- which is the one follow -- consists in generalizing the definition of a voting rule so that it takes truncated ballots as input.
In this line,
Oren {\em et al.} \cite{oren2013efficient} analyze $top$-$k$ voting by assessing the values of $k$ needed to ensure the true winner is found with high probability for specific preference distributions. Skowron {\em et al.} \cite{skowron2015achieving} use $top$-$k$ voting as a way to approximate some multiwinner rules. Filmus and Oren \cite{filmus2014efficient} study the performance of top-$k$ voting under the impartial culture distribution for the Borda, Harmonic and Copeland rules. 
They assess the values of $k$ needed to find the true winner  with high probability, and
they report on numerical experiments that show that under the impartial culture, top-$k$ ballots for reasonable small values of $k$ give accurate results. 

Bentert and Skowron \cite{bentert2019comparing} focus on top-$k$ approximations of voting rules that are defined via the maximization of a score (positional scoring rules and maximin). They evaluate the quality of the approximation of a voting rule by a top-$k$ rule by the worst-case ratio between the scores, with respect to the original profile, of the winner of the original rule and the winner of the approximate rule. They identify the top-$k$ rules that best approximate positional scoring rules (we give more details in Section \ref{ratio}). 
Their theoretical analysis is completed by numerical experiments using  profiles generated from different distributions over preferences: they show that for the Borda rule a small value of $k$ is needed to achieve a high approximation guarantee while maximin needs more information from a sufficiently many voters to determine the winner.

Ayadi \textit{et al.} \cite{ayadi2019single} evaluate the extent to which STV with $top$-$k$ ballots approximates STV with full information. They show that for small $k$, $top$-$k$ ballots are enough to identify the correct winner quite frequently, especially for data taken from real elections. Finally, the recognition of singled-peaked $top$-$k$ profiles is studied in \cite{Lackner14} while the computational issues of manipulating rules with $top$-$k$ profiles is addressed in \cite{NarodytskaW14}. \medskip

Our contribution concerns non-interactive elicitation. We adapt different voting rules to truncated ballots: we define approximations of voting rules which take as input the $top$-$k$ candidates of each ballot. The question is then, \textit{are these approximations good predictors of the original rule?} We answer this question by considering two measures: the probability that the approximate rule selects the `true' winner, and the ratio between the scores (for the original rule) of the true winner and the winner of the approximate rule. For the latter measure we give a worst-case theoretical analysis. For both measures we give an empirical study, based on randomly generated profiles and on real-world data. Our findings are that for several common voting rules, both for randomly generated profiles and real data, a very small $k$ suffices.

Our research can be seen as a continuation of Filmus and Oren \cite{filmus2014efficient}. We go further on several points: we consider more voting rules; beyond impartial culture, we consider a large scope of  distributions;
we study score distortion; and we include experiments using real-world data sets. 
 Our work is also closely related to \cite{bentert2019comparing}, who have obtained related results independently (see Sections \ref{accuracy} and \ref{ratio} for a discussion).\smallskip

Our interpretation of top-$k$ ballots is {\em epistemic}: the central authority in charge of collecting the votes and computing the outcome ignores the voters' preferences below the $top$-$k$ candidates of each voter, and has to cope with it as much as possible. Voters may very well have a complete preference order in their head (although it does not need to be the case), but they will simply not be asked to report it.
\smallskip
 
Section 2 gives some background. Section 3 defines top-$k$ approximations of different voting rules. Section 4 analyses empirically the probability that approximate rules select the true winner. Section 5 analyses score distortion, theoretically and empirically. 
\section{Preliminaries}\label{sec:preliminaries}

An election is a triple $E=\left\langle N, A, P\right\rangle$ where: $N=\lbrace 1,...,n \rbrace$ is the set of \textit{voters}, $A$ is the set of \textit{candidates}, with $\vert A\vert = m$; and  $P =(\succ_{1},...,\succ_{n})$ is the \textit{preference profile} of voters in $N$, where for each $i$, $\succ_{i} \in P$
is a linear order over $A$. ${\cal P}_m$ is the set of all profiles over $m$ alternatives (for varying $n$).  

Given a profile $P$, $N_{P}(a, b) = \# \left\{i, a \succ_{i} b\right\}$ is the number of voters who prefer $a$ to $b$ in $P$. The \textit{majority graph} $M(P)$ is the graph whose set of vertices is the set of the candidates $A$ and in which for all $a,b \in A$, there is a directed edge from $a$ to $b$ (denoted by $a \rightarrow b$) in $M(P)$ if $N_{p}(a,b) > \frac{n}{2}$.

A resolute voting rule is a function $\large f: E \to A$. Resolute rules are typically obtained from composing an irresolute rule (mapping an election into an non-empty subset of candidates, called co-winners) with a tie-breaking mechanism. 

A {\em positional scoring rule} ({\em PSR}) $f^s$ is defined by a non-negative vector $\vec{s}=\left(s_{1},...,s_{m}\right)$ 
such that $s_{1}\geq...\geq s_{m}$ and $s_{1} > 0$. Each candidate receives $s_{j}$ points from each voter $i$ who ranks her in the $j^{th}$ position, and the score of a candidate is the total number of points she receives from all voters i.e. $S(x)=\sum^{n}_{i=1}s_{j}$. The winner is the candidate with highest total score. Examples of scoring rules are the Borda and Harmonic rules, with $s_{Borda}=(m-1, m-2,\dots, 0)$ 
and $s_{Harmonic}=(1, \nicefrac{1}{2},\dots, \nicefrac{1}{m})$. 

We now define three \textit{pairwise comparison rules}. 

The {\em Copeland} rule outputs the candidate maximizing the {\em Copeland score}, where the Copeland score of $x$ is the number of candidates $y$ with $x \rightarrow y$ in $M(P)$, plus half the number of candidates $y \neq x$ with no edge between $x$ and $y$ in $M(P)$. 

The \textit{Ranked Pairs} (RP) rule proceeds by ranking all pairs of candidates $(x,y)$ according to $N_P(x,y)$ (using tie-breaking when necessary); starting from an empty graph over $A$, it then considers all pairs in the described order and includes a pair in the graph if and only if it does not create a cycle in it. At the end of the process, the graph is a complete ranking, whose top element is the winner.

The \textit{maximin} rule outputs the candidates that maximize
$min_{ x \in A \left( x \neq a \right)} N_P \left( a,x \right)$.

For the experiments using randomly generated profiles, we use 
the \textit{Mallows $\phi$-model} \cite{mallows1957non}. It is a (realistic) family of distributions over rankings, parametrized by a modal or reference ranking $\sigma$ and a dispersion parameter $\phi \in \left[ 0, 1\right]$: $P \left( r; \sigma, \phi \right) = {\frac{1}{Z}} \phi^{d \left(r,\sigma \right)}$, where $r$ is any ranking, $d$ is the Kendall tau distance and $Z = \sum_{r'} \phi^{d \left(r,\sigma \right)}= 1 \cdot \left(1+\phi \right) \cdot \left(1+\phi+\phi^{2}\right)\cdot...\cdot \left(1+...+\phi^{m - 1} \right)$ is a normalization constant. With small values of $\phi$, the mass is concentrated around $\sigma$, while $\phi = 1$ gives the uniform distribution \textit{Impartial Culture (IC)}, where all profiles are equiprobable.

\section{Approximating Voting Rules from Truncated Ballots}\label{sec:approximating}

Given $k \in \left\{1,...,m-1\right\}$, a {\em top-$k$} election is a triple $E'=\left\langle N, A, R\right\rangle$ where $N$ and $A$ are as before, and 
$R =(\succ_{1}^{k},...,\succ_{n}^{k})$, where each $\succ_{i}^{k}$ is a ranking of $k$ out of $m$ candidates in $A$. $R$ is called a 
{\em top-$k$ profile}.
If $P$ is a complete profile, $\succ_i^k$ is the top-$k$ truncation of $\succ_i$ (i.e., the best $k$ candidates, ranked as in  $\succ_i$), and $P_k = (\succ_{1}^k,...,\succ_{n}^k)$ is the top-$k$-profile induced from $P$ and $k$.
A \textit{top-$k$ (resolute) voting rule} is a function $f_k$ that maps each $top$-$k$ election $E'$ to a candidate in $A$.
 We sometimes apply a top-$k$ rule to a complete profile, with $f_k(P) = f_k(P_k)$. 
 We now define several $top$-$k$ rules.

\subsection{Borda and Positional Scoring Rules}

\begin{Definition}
A $top$-$k$ PSR $f^s_k$ is defined by a scoring vector 
$s=\left(s_{1},s_{2}\dots, s_{k},s^{*}\right)$ such that $s_{1}\geq s_{2}\geq...\geq s_{k}\geq s^{*} \geq 0$ and $s_{1} > s^{*}$. 
Each candidate in a $top$-$k$ vote receives $s_{j}$ points from each voter $i$ who ranks her in the $j^{th}$ position. A non-ranked candidate gets $s^{*}$ points. 
The winner is the candidate with highest total score. 
\end{Definition}

When starting from a specific {\em PSR} for complete ballots, defined by scoring vector $s=\left(s_{1}, \ldots, s_{m}\right)$, two choices of $s^*$ particularly make sense: 
\begin{itemize}
\item zero score: $s^* = 0$
\item average score: $s^* = \frac{1}{m-k}\left(s_{k+1} + \ldots + s_m\right)$
\end{itemize}
We denote the corresponding approximate rules as $f^0_k$ and $f^{av}_k$. $Borda^{av}_k$ is known under the name  \textit{average score modified Borda Count} \cite{cullinan2014borda,Grandi:2016:BCC:2989792.2989797},
while $Borda^{0}_k$ is known under the name \textit{modified Borda Count} \cite{emerson1994politics}). In the experiments we report only on $Borda^{av}_k$, as $Borda^{0}_k$ gives very similar results.


Young \cite{young1975social} characterized positional scoring rules by these four properties, which we describe informally (for resolute rules): 
\begin{itemize}
\item \textit{Neutrality}: all candidates are treated equally
\item \textit{Anonymity}: all voters are treated equally 
\item \textit{Reinforcement}: if $P$ and $Q$ are two profiles (on disjoint electorates) and $x$ is the winner for $P$ and the winner for $Q$, then it is also the winner for $P \cup Q$.
\item \textit{Continuity}: if $P$ and $Q$ are two profiles and $x$ is the winner for $P$ but not for $Q$, adding sufficiently many votes of $P$ to $Q$ leads to elect $x$.
\end{itemize}

$f$ is a {\em PSR} if and only if it satisfies neutrality, anonymity, reinforcement and continuity \cite{young1975social}. 

These four properties still make sense for truncated ballots. 
It is not difficult to generalize Young's result to $top$-$k$ PSR:
\begin{Theorem}
A $top$-$k$ voting rule is a $top$-$k$ {\em PSR} if and only if it satisfies 
neutrality, anonymity, reinforcement, and continuity.
\end{Theorem}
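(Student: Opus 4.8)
The plan is to transplant Young's proof of his scoring-rule characterization from the space of complete rankings to the finite space $B$ of top-$k$ ballots (injective rank-orderings of $k$ of the $m$ candidates, of which there are $m!/(m-k)!$). The two implications have very different weight. The direction ``top-$k$ PSR $\Rightarrow$ the four axioms'' is a routine verification: the total score $S(a)=\sum_i s_{\mathrm{pos}_i(a)}$ (reading $\mathrm{pos}_i(a)=*$ when voter $i$ leaves $a$ unranked) is additive over voters, which gives reinforcement at once (a candidate maximizing $S_P$ and $S_Q$ maximizes $S_P+S_Q=S_{P\cup Q}$) and continuity (enough copies of $P$ let $S_P$ dominate a fixed $S_Q$); anonymity holds because $S(a)$ is a sum over voters, and neutrality because $s_{\mathrm{pos}_i(a)}$ sees $a$ only through its position. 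I would dispatch this in a few lines.

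The substance is the converse. By anonymity I would identify a profile with its multiplicity vector $\mathbf{n}=(n_b)_{b\in B}$ in the nonnegative rational cone $\mathbf{Q}_{\ge 0}^{B}$, working with the irresolute rule (the set of co-winners) and recovering the resolute $f_k$ through a fixed tie-break. The crux is a representation lemma: reinforcement and continuity together force, for each candidate $a$, the set of profiles on which $a$ is a co-winner to be a convex cone, and these cones to meet along hyperplanes through the origin. Reading off the normals of the separating hyperplanes produces a scoring representation --- numbers $\sigma_b(a)$, the contribution of ballot $b$ to candidate $a$ --- for which the co-winners are exactly the maximizers of $a\mapsto\sum_b n_b\,\sigma_b(a)$. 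This abstract step is the heart of Young's argument and is where I expect the real difficulty to lie; fortunately it is blind to the particular ballot structure, so the complete-ballot proof carries over to $B$ essentially verbatim, with continuity used precisely to pass from the rational cone to genuine hyperplanes and to control the boundary where ties occur.

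What remains is to make $\sigma$ positional, and here neutrality does the work. The symmetric group $S_m$ acts on $A$ and hence on $B$, and neutrality makes the representation equivariant, $\sigma_{\pi b}(\pi a)=\sigma_b(a)$ for every permutation $\pi$. Any two ballots placing $a$ in the same position are interchanged by a permutation fixing $a$, so $\sigma_b(a)$ depends only on the rank of $a$ in $b$; setting $s_j$ to be this common value for rank $j\le k$ and $s^*$ for the unranked position, the co-winner condition becomes ``maximize $\sum_i s_{\mathrm{pos}_i(a)}$'', a top-$k$ scoring rule. Two bookkeeping points deserve care. First, the representation is determined only up to the affine change $s\mapsto\alpha s+\beta\mathbf{1}$ with $\alpha>0$, so one normalizes (for instance $s^*=0$) to meet the Definition. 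Second, exactly as in Young's theorem, the four axioms by themselves characterize scoring rules with arbitrary weights; the monotonicity and nonnegativity $s_1\ge\cdots\ge s_k\ge s^*\ge 0$ appearing in the Definition are not forced by the axioms, so they should be read as the normalization picking out the canonical positional representative rather than as consequences of neutrality, anonymity, reinforcement and continuity.
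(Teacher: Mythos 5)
Your proposal is correct in outline but takes a genuinely different route from the paper. You rebuild Young's characterization from scratch on the space $B$ of truncated ballots: anonymity turns profiles into multiplicity vectors, reinforcement plus continuity yield the convex-cone/separating-hyperplane representation $\sigma_b(a)$, and neutrality collapses $\sigma$ to a positional vector via the orbit argument (two ballots placing $a$ in the same position are exchanged by a permutation fixing $a$ --- this verification is the one genuinely top-$k$-specific step in your plan, and it is right). This is essentially Myerson's extension of Young's theorem to abstract ballot spaces; your claim that the cone argument is blind to the ballot structure is true, but that argument is where all the difficulty sits, and you discharge it only by asserting it ``carries over essentially verbatim''. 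The paper does something much lighter: it never re-opens Young's proof. Given $F_k$ satisfying the four axioms, it defines the complete-ballot rule $F(P) := F_k(P_k)$, observes that $F$ inherits all four axioms, invokes Young's theorem as a black box to obtain a scoring vector $(s_1,\ldots,s_m)$, and then notes that $F$ is by construction \emph{top-$k$-only} (if $P_k = P'_k$ then $F(P) = F(P')$), which a PSR satisfies only if $s_{k+1} = \cdots = s_m$; hence $F_k$ is a top-$k$ PSR. Your route buys self-containedness and generality (nothing in it uses that top-$k$ ballots arise as truncations of complete orders); the paper's buys a five-line proof at the price of treating Young's complete-ballot theorem as a primitive.

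One caveat in your last paragraph needs fixing. You are right that the four axioms do not force $s_1 \ge \cdots \ge s_k \ge s^* \ge 0$, and noticing this is to your credit --- the paper glosses over exactly the same issue when it extracts from Young ``a PSR, associated with some vector''. But the failure is not merely ``a normalization picking out the canonical positional representative'': adding constants and positive scaling can arrange non-negativity and $s^* = 0$, but they cannot repair monotonicity. The rule with score vector $(0,1,0,\ldots,0)$ satisfies all four axioms and coincides with no monotone scoring rule, so the equivalence is literally true only if ``top-$k$ PSR'' is read with arbitrary score vectors. This imprecision is shared by the statement and by both proofs; your argument neither creates nor removes it.
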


\begin{proof}
The left-to-right direction is obvious. 
For the right-to-left direction, let us first define the {\em $top$-$k$-only} property: a standard voting rule is {\em $top$-$k$-only} if for any two complete profiles $P, P'$, if $P_k = P_k'$, then $F(P) = F(P')$.  Then (1) a positional scoring rule $F$ is $top$-$k$-only if and only if $s_{k+1} = \ldots = s_m$ (if this equality is not satisfied, then it is easy to construct two profiles $P$, $P'$ such that $P_k = P_k'$ and $F(P) \neq F(P')$). Now, assume $F_k$ is a $top$-$k$ rule satisfying neutrality, anonymity, reinforcement, and continuity. Let $F$ be the standard voting rule defined by $F(P) = F_k(P_k)$. Clearly, $F$ also satisfies neutrality, anonymity, reinforcement, and continuity, and due to Young's characterization result, $F$ is a PSR, associated with some vector $(s_1, \ldots, s_m)$. Because $F$ is also $top$-$k$-only, using (1) we have  $s_{k+1} = \ldots = s_m$, therefore, $F_k$ is a $top$-$k$-PSR.
\end{proof}

\subsection{Rules Based on Pairwise Comparisons} 

Given a truncated ballot $\succ_{i}^{k}$ and two candidates $a, b \in A$, we say that $a$ dominates $b$ in $\succ_{i}^{k}$, denoted by $a >_i^k b$, if one of these two conditions holds: (1) $a$ and $b$ are listed in $\succ_{i}^{k}$, and $a \succ_{i}^{k} b$; (2) $a$ is listed in $\succ_{i}^{k}$, and $b$ is not.

For instance, for $A = \{a,b,c,d\}$, $k = 2$, and $\succ_{i}^{2} = (a \succ b)$, then $a$ dominates $b$, 
both $a$ and $b$ dominate $c$ and $d$, 
but $c$ and $d$ remain incomparable in $\succ_{i}^{2}$.
Now, the notions of pairwise comparison and majority graph are extended to $top$-$k$ truncated profiles in a straightforward way: 

\begin{Definition}
Given a top-$k$ profile $R$, $N_{R}(a, b) = \# \left\{i, a >_{i}^k b\right\}$ is the number of voters in $R$ for whom $a$ dominates $b$. The \textit{top-$k$ majority graph} $M_{k}(R)$ induced by $R$ is the graph whose set of vertices is the set of the candidates $A$ and in which there is a directed edge from $a$ to $b$ if $N_{R}(a, b) > N_{R}(b, a)$.
\end{Definition}

The top-$k$ rules $Copeland_{k}$, $Maximin_{k}$ and $RP_k$ are defined exactly as their standard counterparts, but 
starting from the top-$k$ pairwise comparisons and majority graph instead of the standard ones. Note that $f_{m-1} = f$, and (for all rules $f$ we consider) $f_1$ coincides with plurality.

\begin{Example}
Let us consider this 62-voter profile: 20 votes $a \succ d \succ c \succ b$, 10 votes $b \succ c \succ d \succ a$, 15 votes $c \succ d \succ b \succ a$ and 17 votes: $d \succ c \succ a \succ b$.
\begin{figure}[h!]
   \centering
    \includegraphics[width=6cm,height=2.5cm]{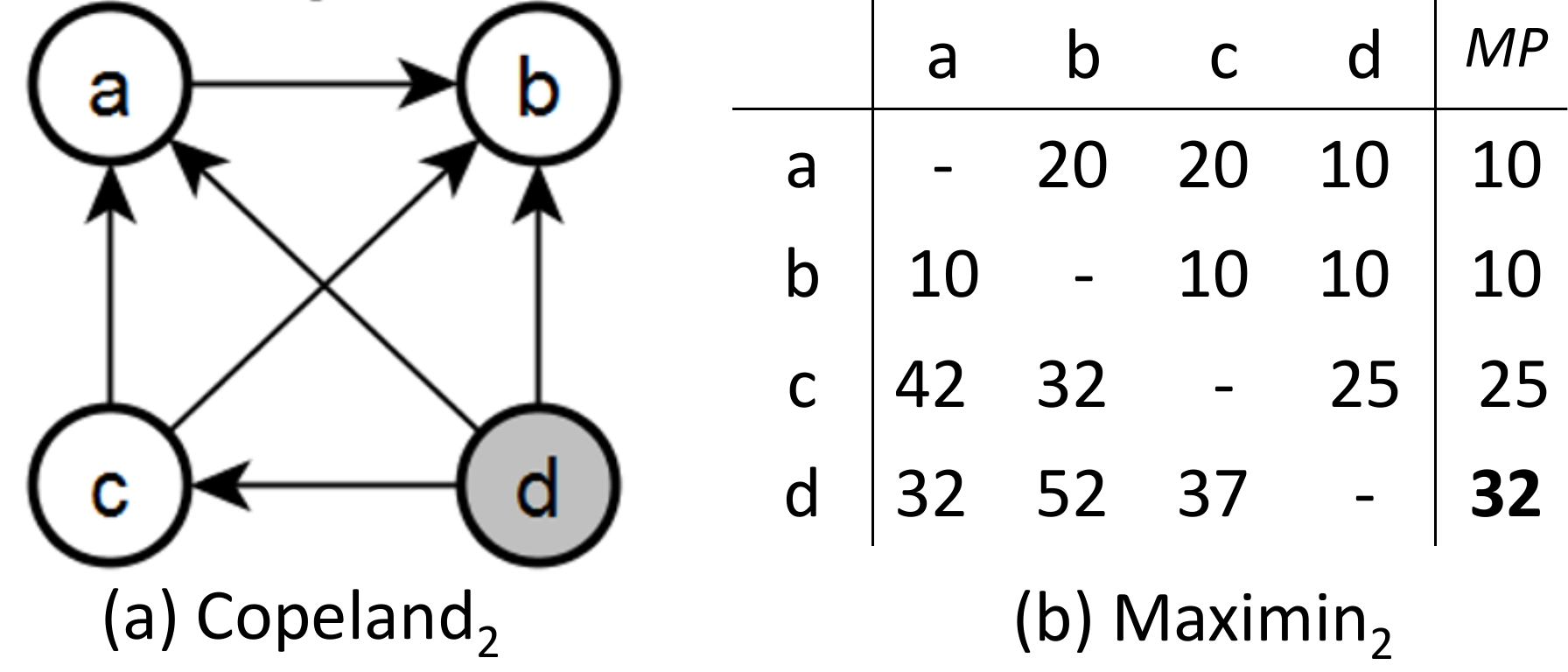}
	\caption{top-$2$ approximations of Copeland and Maximin}
\label{example}
\end{figure}
Fig.~\ref{example}~(a) shows the top-$k$ majority graph and the Copeland winner for $k = 2$, and Fig.~\ref{example}~(b) shows the top-$k$ pairwise majority matrix and the $Maximin_k$ winner for $k = 2$. In both cases, the winner for $k = 1$ (resp. $k = 3$) is $a$ (resp. $d$).
For RP, the winner under $RP_k$ for $k \in \{1,2,3\}$ is the same as the winner under $Copeland_k$ since the k-truncated majority graph does not create cycles.
	
\end{Example}
\section{Probability of Selecting the True Winner} \label{accuracy}

The first way of measuring the quality of the $top$-$k$ approximations is to determine the probability that they output the `true winner'; that is, the winner of the original voting rule, under various distributions (Subsection \ref{proba-distri}) and for real-world data (Subsection \ref{proba-real}). 
In both cases, the procedure is similar: given a voting rule $f$, we consider many profiles, and for each profile $P$ we compare $f(P)$ to $f_k(P_k)$ for each $k =\{ 1,\ldots, m-2\}$. The difference between Subsections \ref{proba-distri} and \ref{proba-real} is that in the former we randomly draw profiles according to a given distribution, and for the latter, we draw a profile by selecting $n$ votes at random in the database.
We include in our experiments $STV_k$ rule defined by Ayadi \textit{et al.} \cite{ayadi2019single}, which takes $top$-$k$ ballots as input; and we compared it to our truncated rules. $STV_k$ proceeds as follows: in each round the candidate with the smallest number of votes is eliminated (using a tie breaking when necessary), if all ranked candidates are eliminated by STV, the vote is then `exhausted' and ignored during further counting.
\subsection{Experiments Using Mallows Model}\label{proba-distri}

Here we follow the research direction initiated by Filmus and Oren \cite{filmus2014efficient}, but we consider more rules, and beyond {\em Impartial Culture} we also consider correlated distributions within the {\em Mallows} model.
For each experiment we draw 1000 random preference profiles.
In the first set of experiments, we take $m = 7$, we let $n$ and $\phi$ vary, and we measure the accuracy of the approximate rule for $k = 1$ and $k = 2$. Results are reported on Table \ref{resultsTable1}.
Note that for $k = 1$, our results can be viewed as answering the question: \textit{with which probability does the true winner with respect to the chosen rule coincide with the plurality winner?}

{\setlength{\tabcolsep}{0.1cm} 
\begin{table}[h]
	\centering
	\caption{Success rate, Mallows model: $m = 7$, varying $n$, $k$ and $\phi$.}
	\tiny{
		\begin{tabular}{|c||c|c|c|c|c ||c |c |c |c |c |}
			\hline
$\phi$& n=100 &	n=200 &	n=300 &	n=400& n=500	& n=100& n=200& n=300& n=400& n=500\\
		\hline
		
%
	
\multicolumn{6}{|c||}{$Borda^{av}_{1}$} & \multicolumn{5}{c|}{$Borda^{av}_{2}$}\\
			\hline

0.7 &0.902	&0.958&	0.986&	0.992&	\textbf{1.0}&0.951&	0.98&	0.992&	\textbf{1.0}&	1.0\\
0.8 &0.77	&0.855	&0.9&	0.94&	0.963&0.853&	0.913&	0.956&	0.972&	0.986\\
0.9 &0.588	&0.694&	0.685	&0.718&	\textbf{0.771}&0.772&	0.805&	0.827&	0.846&	0.873\\
1 &0.434&	0.445&	0.424&	0.422	&\textbf{0.397}&0.576	&0.56&	0.586&	0.598&	\textbf{0.584}\\	
	\hline
%
	%
\multicolumn{6}{|c||}{$Copeland_{1}$} & \multicolumn{5}{c|}{$Copeland_{2}$}\\
\hline

0.7 &0.908&	0.968&	0.991&	0.994&	\textbf{1.0}&0.947&	0.99&	1.0&	\textbf{1.0}&	1.0\\
0.8 &0.736&	0.847&	0.891&	0.934&	0.949&0.822&	0.904&	0.952&	0.984&	0.982\\
0.9 &0.497&	0.567&	0.655&	0.684&	\textbf{0.726}&0.62&	0.69&	0.77&	0.805&	0.838\\
1   &0.325& 0.332&	0.323&	0.343&	\textbf{0.319}&0.458&	0.432&	0.45&	0.442&	\textbf{0.425}\\
     \hline
		\multicolumn{6}{|c||}{$Maximin_{1}$} & \multicolumn{5}{c|}{$Maximin_{2}$}\\
		
		\hline

0.7 &	0.908&	0.969&	0.986&	0.99&	\textbf{1.0}&0.968&	0.991&	1.0&	\textbf{1.0}	&   1.0\\
0.8 &	0.787&	0.856&	0.915&	0.939&	0.955&0.872&	0.934&	0.961&	0.976&	0.977\\
0.9 &	0.57&	0.633&	0.691&	0.717&	\textbf{0.748}&0.735&	0.76&	0.794&	0.838&	0.869\\
1   &	0.415&	0.4&	0.423&	0.393&	\textbf{0.391}&0.52&   0.532&	0.544&	0.545&	\textbf{0.525}\\

\hline

		\multicolumn{6}{|c||}{$Harmonic_{1}$} & \multicolumn{5}{c|}{$Harmonic_{2}$}\\
		
		\hline

0.7 &	0.941 & 0.986 & 0.996 & 1.0   & \textbf{1.0}  & 0.98 & 0.992 & 1.0  & \textbf{1.0}   & 1.0\\
0.8 &	0.895 & 0.916 & 0.958 & 0.959 & 0.968& 0.958& 0.974 & 0.987& 0.988 & 0.996\\
0.9 &	0.805 & 0.808 & 0.83  & 0.866 & \textbf{0.863}& 0.895& 0.921 & 0.934& 0.939 & 0.952\\
1   &	0.725 & 0.742 & 0.74  & 0.697 & \textbf{0.737}& 0.872& 0.867 & 0.859& 0.861 & \textbf{0.859}\\

\hline

		\multicolumn{6}{|c||}{$RP_{1}$} & \multicolumn{5}{c|}{$RP_{2}$}\\
		
		\hline

0.7 &	0.926 & 0.972& 0.995& 0.995& \textbf{1.0}  & 0.963& 0.994& 1.0  & \textbf{1.0}  & 1.0 \\
0.8 &	0.778 & 0.856& 0.908& 0.939& 0.957& 0.871& 0.928& 0.967& 0.983& 0.989\\
0.9 &	0.587 & 0.64 & 0.674& 0.718& \textbf{0.749}& 0.725& 0.765& 0.777& 0.838& 0.862\\
1   &	0.426 & 0.405& 0.416& 0.375& \textbf{0.385}& 0.558& 0.524& 0.557& 0.498& \textbf{0.519}\\

\hline

		\multicolumn{6}{|c||}{$STV_{1}$} & \multicolumn{5}{c|}{$STV_{2}$}\\
		
		\hline

0.7 &	 0.907& 0.981& 0.985& 0.998& \textbf{1.0}& 0.959& 0.993&  0.997& \textbf{1.0}  &   1.0\\
0.8 &	 0.808& 0.865& 0.917& 0.918&  0.943      & 0.882& 0.933&  0.962& 0.966         &  0.974\\
0.9 &	 0.603&  0.64& 0.721& 0.729&  0.763      & 0.742& 0.776&  0.792&  0.855        &  0.846\\
1   &	  0.45& 0.464& 0.477& 0.471& 0.468       & 0.576& 0.593&  0.61 &  0.592        &  0.585\\

\hline

		\end{tabular}}
\label{resultsTable1}
\end{table}}

{\em For $k = 1$:} when $n \leq 100$ and $\phi \leq 0.7$, prediction reaches 90\% for Borda, Copeland, Maximin and STV, 92\% for RP, and 94\% for Harmonic. When $n \geq 500$, the accuracy is perfect for all rules.
For $\phi = 0.8$, the success rate decreases but results are still good with a large number of voters.
For $\phi = 0.9$ and $n = 500$, the rate reaches 86\% for Harmonic and 72\% for Copeland, with intermediate (and similar) results for Borda, Maximin and $RP$ and STV.
For the $IC$, the rate decreases dramatically when $k$ becomes small, except for Harmonic (73\% when $n=500$ against 46\% for STV, 31\% for Copeland and 40\% for the remaining rules).

{\em For $k = 2$:} the probability of selecting the true winner reaches 100\% (resp. 98\%) when $\phi \leq 0.7$ (resp. $\phi \leq 0.8$) and $n \geq 400$ (resp. $n \geq 500$). With high values of $\phi$, 
Harmonic still outperforms other rules followed by $Borda^{av}$ and STV then the other rules.
Consistently with the results obtained by Bentert and Skowron \cite{bentert2019comparing} for the IC, approximating the maximin rule is harder than position scoring rules where maximin needs more information from the voters in order to obtain high approximation guarantees.
In all cases, top-2 ballots seem to be always sufficient in practice to predict the winner with 100\% accuracy with a low value of $\phi$.


In the second set of experiments, we are interested in determining the value of $k$ needed to predict the correct winner with large elections and with high value of $\phi$. We take $k = \left\{1,...,m\right\}$, $n=2000$, $\phi=\{0.9,1\}$ and $m=20$. 
Fig.~\ref{varym} shows depicted results where 1000 random preference profiles are generated for each experiment.
Results suggest that in large elections and unless $\phi$ is very high ($\phi=0.9$), top-$k$ rules are able to identify the true winner 
when $k = 6$ (resp. $k = 8$) for Harmonic (resp. the remaining rules) out of $m=20$.
We can also observe the behavior of different truncated rules when $\phi=0.9$: the best accuracy is obtained again by Harmonic and the accuracy of all other rules are very close, which we found surprising.
When $\phi = 1$, the latter behavior changes: Harmonic still has the best results, followed by $Borda^{av}$ and STV, then the remaining rules. 
The good performance of Harmonic in all cases can be explained by the fact that the closer the scoring vector to plurality, the better the prediction. 

\begin{figure}[h]
\hspace{-0.5cm}
    \includegraphics[width=12.5cm]{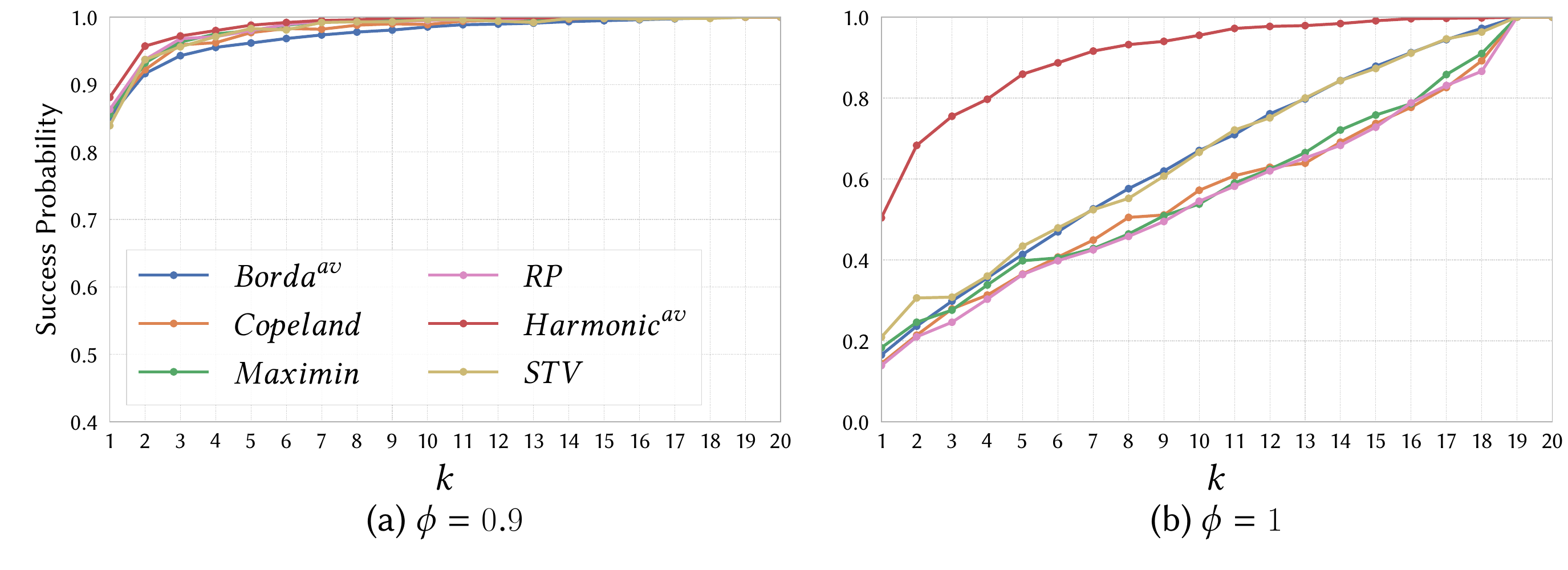}
	\caption{Success rate, Mallows model: $n = 2000$, $m =20$, varying $\phi$ and $k$.} 
\label{varym}
\end{figure}

Next, for each value of $n \in \{1000, 2000\}$, $\phi \in \{.7, .8, .9, 1\}$, and $m \in \{7, 10, 15, 20\}$, we generated 1000 random profiles, and for each of our rules, we determined the minimal value $k$ (as a function of $m$) such that the winner is correctly determined from the top-$k$ votes for all generated profiles. The results for $Borda^{av}$ are:
\begin{itemize}
\item for $\phi = 0.7$, $k=1$ is always sufficient, whatever $m$.
\item for $\phi = 0.8$, $k=2$ (resp. $k = 1$) is always sufficient for $n = 1000$ (resp. $n = 2000$), whatever the value of $m$.
\item for $\phi = 0.9$, we observe that the minimal value of $k$ such that the correct winner is always correctly predicted is around $\frac{7}{10}m$ (for $n = 1000)$ and $\frac{2}{5}m$ (for $n = 2000)$.
\item for $\phi = 1$, the minimal value of $k$ is $m-1$: we always find a generated profile for which we get an incorrect result if the profile is not complete.
\end{itemize}

The results for Copeland, maximin, RP and STV are similar to those for Borda. For Harmonic, we observe that $k = 1$ is always sufficient for $\phi \leq 0.8$ and $n=2000$, and that for $\phi =0.9$ (resp. $\phi =1$), the value of $k$ needed is around $\frac{1}{3}m$ (resp. $\frac{2}{3}m$).


In order to see how our approximations behave with small number of voters and a high dispersion parameter, we 
take $k = \left\{1,...,m\right\}$, $n=15$, $m = 7$, and $\phi \in \{0.9,1\}$. The results are on Fig.~\ref{results2}. 
The worst performance is obtained with Copeland, while the other rules perform more or less equally well. 
These results are consistent with the results obtained by Skowron et al. \cite{skowron2015achieving} for multiwinner rules: elections with few voters and high dispersion appear to be the worst-case scenario for predicting the correct winner using top-truncated ballots.
For Harmonic, even with few voters, winner prediction is almost perfect when $k=4$ and $m = 7$.
\begin{figure}[h]
   \centering
    \includegraphics[width=10cm]{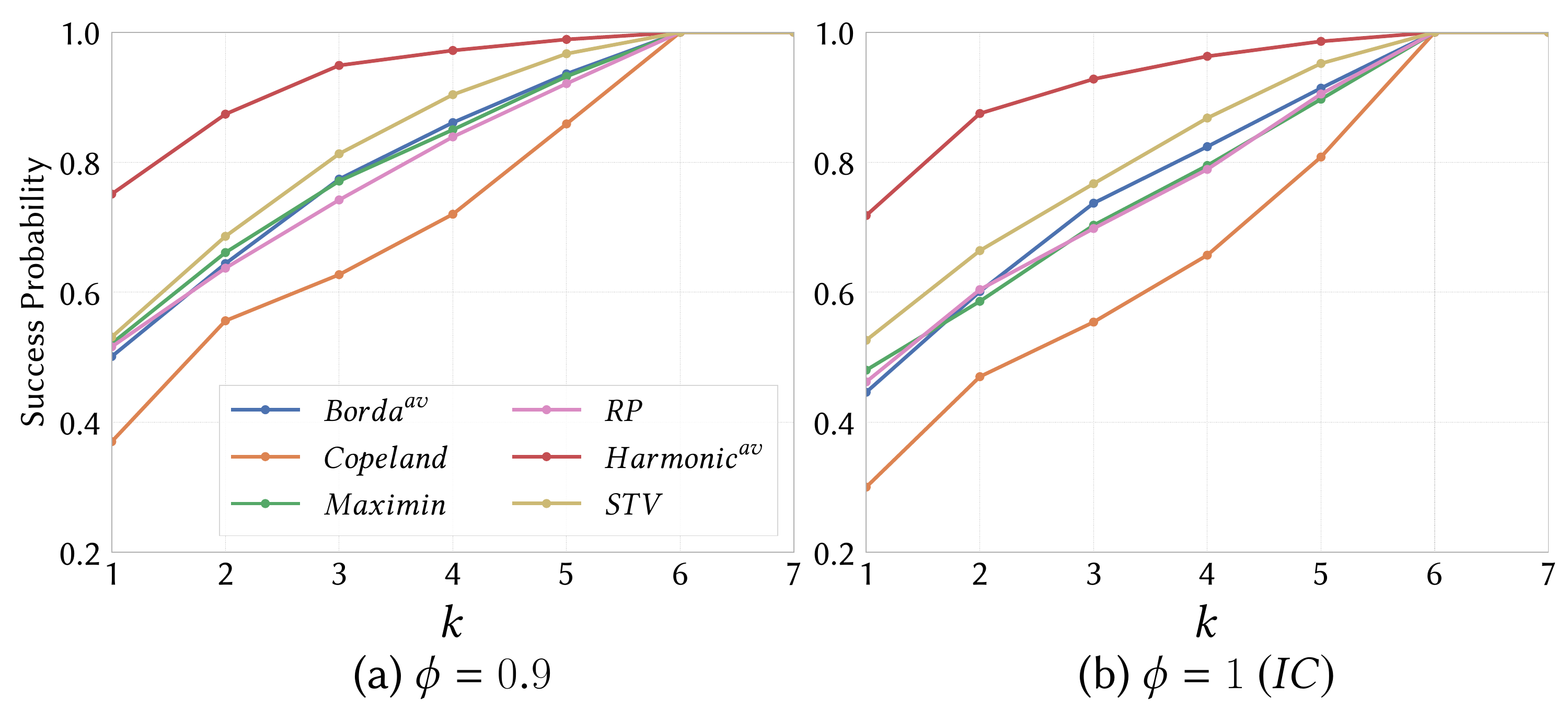}
	\caption{Success rate, Mallows model: $m = 7$, $n = 15$, varying $\phi$ and $k$.}
\label{results2}
\end{figure}


\vspace{-0.5cm}
\subsection{Experiments Using Real Data Sets}\label{proba-real}
We now consider real data set from \textit{Preflib} \cite{mattei2013preflib}: 
2002 election for Dublin North constituency with 12 candidates and 3662 voters.
We consider data with samples of $n^*$ voters among $n$ ($n^* < n$), starting by $n^* =10$ and increment $n^*$ in steps of $10$. In each experiment, 1000 random profiles are constructed with $n^*$ voters; then we consider the top-$k$ ballots obtained from these profiles, with $k = \{1,2,3\}$, and we compute the frequency with which we select the true winner. 
Fig.~\ref{Real0} shows results for Dublin with small elections ($n^* = \{10,...,100\}$) while Fig.~~\ref{Real1} presents results for large elections ($n^* = \{100,...,2000\}$). Arrows indicate the number of voters from which the prediction is perfect.

Consistently with the results of Fig.~\ref{results2}, for small elections; the success rate is low when $k$ is too small, except for Harmonic where it 
gives the best performance followed by STV (especially when $n^{*} <60$) then the remaining rules, 
e.g. For Harmonic (resp. STV), 92\% (resp. 82\%) accuracy is reached with $k = 3$, $m = 12$ and $n^{*} = 50$ against around 75\% for the remaining rules.
\begin{figure}[h]
  \hspace{-0.5cm}
    \includegraphics[width=13cm]{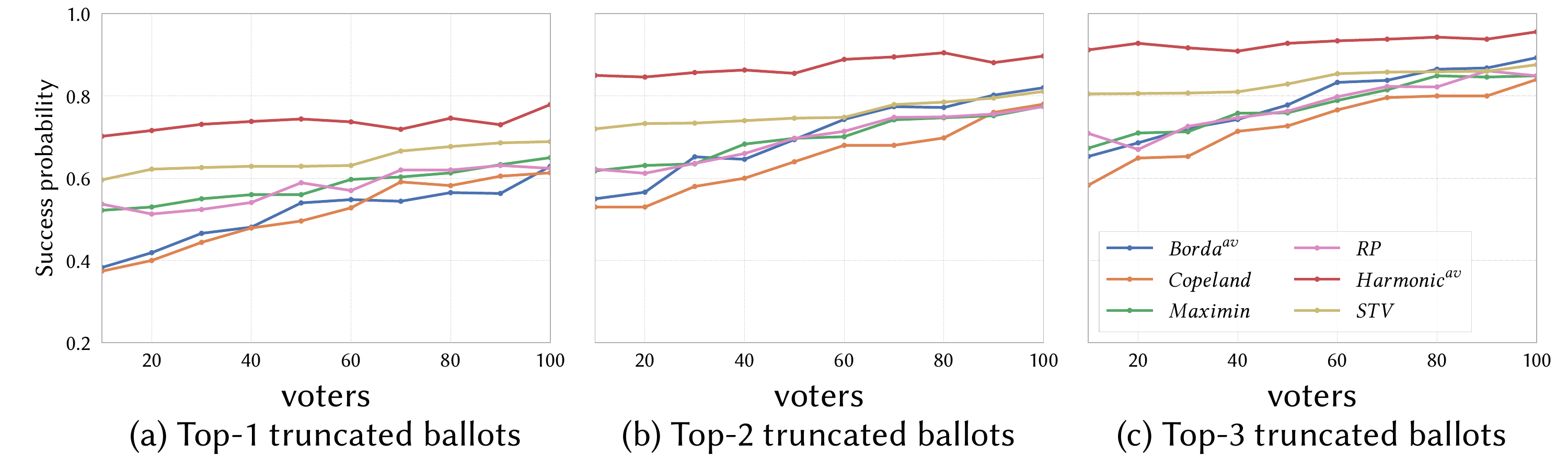}
	\caption{Success rate, Dublin, varying $k$; $n^* = \{10,\dots,100\}$.} 
\label{Real0}
\end{figure}

\begin{figure}[h]
   \hspace{-0.5cm}
    \includegraphics[width=13cm]{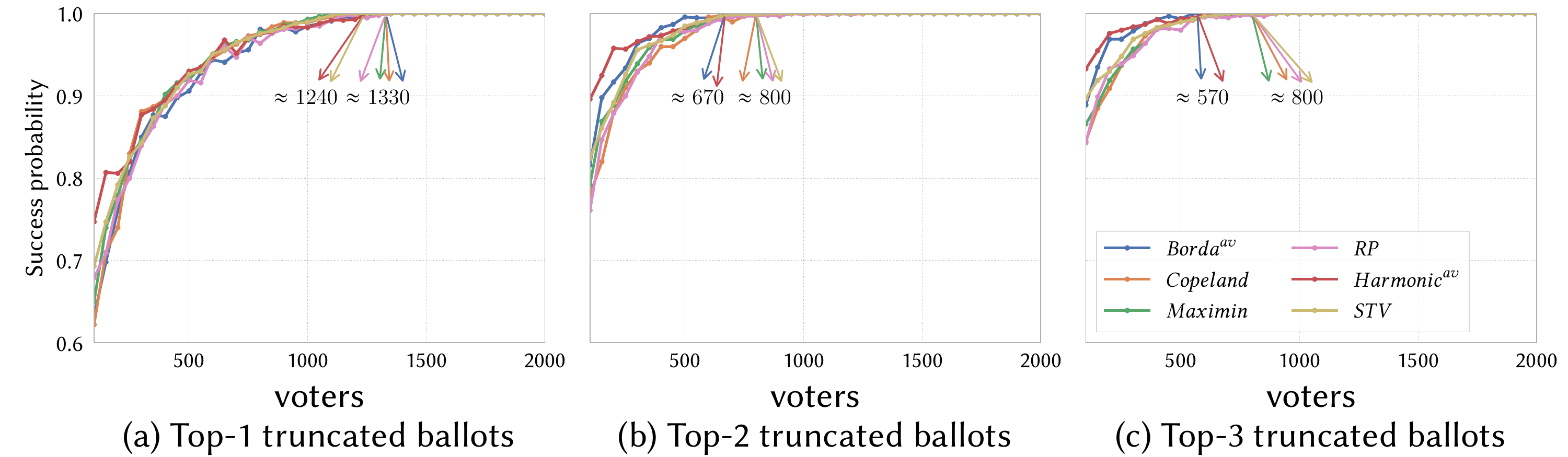}
	\caption{Success rate, Dublin, varying $k$; $n^* = \{100,\dots,2000\}$.} 
\label{Real1}
\end{figure}

For large elections, 
when $k=1$, the different approximations exhibit almost the same behavior except Harmonic, that performs better especially with few voters.
Obviously, increasing the value of $k$ leads to a decrease in the number of voters needed for correct winner selection. 
In general, the different approximations needs a sufficient number of voters to converge to the correct prediction. Scoring rules tend to require less voters.
%

\vspace{-0.5cm}
\section{Measuring the Approximation Ratio}\label{ratio}

\subsection{Worst Case Study}\label{worst}

In order to measure the quality of approximate voting rules whose definition is based on score maximization, a classical method consists in computing the worst-case approximation ratio between the scores (for the original rule) of the `true' winner and of the winner of the approximate rule. Using worst-case score ratios is classical: they are defined for measuring the quality of approximate voting rules  \cite{caragiannis2014socially,ServiceA12a},
for defining the price of anarchy of a voting rule \cite{branzei2013bad} or for measuring the distortion of a voting rule \cite{BoutilierCHLPS15}.

Worst-case score ratios particularly make sense if the score of a candidate is meaningful beyond its use for determining the winner. This is definitely the case for Borda, as the Borda count is often seen as a measure of social welfare (see \cite{Aspremont02}). 
This worst-case score ratio is called the {\em price of top-$k$ truncation}.
\begin{Definition}
Let $f$ be a voting rule defined as the maximization of a score $S$, and $f_k$ a top-$k$ approximation of $f$. The {\em price of top-$k$-truncation} for $f$, $f_k$, $m$, and $k$, is defined as: $R(f,f_k,m,k) = \max_{P \in {\cal P}_m} \frac{S(f(P))}{S(f_k(P_k))}$.
\end{Definition}




\paragraph{\bf Positional Scoring Rules }~~~~~~~~~~~~~~~~~~~~~~~~~~~~

Let $f^s$ be a positional scoring rule defined with scoring vector $s$. 
Assume the tie-breaking priority favors $x_1$.
Let $f_k^{\bar{s}}$ be a top-$k$ approximation of $f^s$, associated with vector $\bar{s} =(s_1, \ldots, s_k,s^*)$, with the same tie-breaking priority. Let $s' = (s_1-s^*, \ldots, s_k-s^*, 0) = (s'_1, \ldots, s'_k, 0)$,  i.e., $s'_i = s_i - s^*$ for $i = 1, \ldots, k$.  Obviously, $f_k^{\bar{s}} = f_k^{s'}$.  For instance, if $f^{\bar{s}}$ is the average-score approximation of the Borda rule, then $\bar{s} = (m-1, \ldots, m-k, \frac{m-k-1}{2})$ and $s' = (m-1-\frac{m-k-1}{2}, \ldots, m-k-\frac{m-k-1}{2}, 0)$. 

Let $S(x,P)$ be the score of $x$ for $P$ under $f^s$ and $S'_k(x,P_k)$ be the score of $x$ for $P_k$ under $f_k^{s'}$.
From now on when we write scores we omit $P$ and $P_k$, i.e., we write $S(x)$ instead of $S(x,P)$, $S'_k(x)$ instead of $S'_k(x,P_k)$  etc. In the rest of Subsection \ref{worst} we assume $k \geq 2$. Let $x_1 = f^{s'}_k(P_k)$ and $x_2 = f^s(P)$. 
\begin{lemma}\label{upper}
$R(f^s,f^{s'}_{k},m,k) \leq 1 - \frac{s_{k+1}}{s'_1} + \left(1+\frac{s^*}{s'_1}\right) \frac{m s_{k+1}}{s_1' + \ldots + s'_k}$
\end{lemma}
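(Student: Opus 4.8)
The plan is to bound the numerator $S(x_2)$ from above and the denominator $S(x_1)$ from below, both in terms of the single quantity $D = S'_{k}(x_1)$ (the maximal truncated score), and only then eliminate the number of voters. The starting point is the identity, valid for every candidate $x$,
\[
S(x) = S'_{k}(x) + s^{*}\, T(x) + \sum_{i:\, p_{i}(x) > k} s_{p_{i}(x)},
\]
where $T(x)$ is the number of voters ranking $x$ in their top $k$ and $p_{i}(x)$ is the position of $x$ on ballot $i$: a top-$k$ appearance at position $j$ contributes $s_{j} = s'_{j} + s^{*}$, which splits into the truncated score $s'_{j}$ plus the constant shift $s^{*}$, while the unranked appearances contribute their true score.

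First I would handle the numerator. Every unranked appearance of $x_2$ lies at a position $\geq k+1$, so its tail sum is at most $(n - T(x_2))\, s_{k+1}$; and since each top-$k$ appearance adds at most $s'_{1}$ to the truncated score we have $T(x_2) \geq S'_{k}(x_2)/s'_{1}$. Substituting the tail bound, collecting the $T(x_2)$ terms (their coefficient is $-(s_{k+1}-s^{*})$, which is $\leq 0$ in the natural cases $s^{*}=0$ and $s^{*}=$ average, so replacing $T(x_2)$ by its lower bound is legitimate), and invoking the winner inequality $S'_{k}(x_2) \leq D$, yields a clean estimate of the form $S(x_2) \leq n\, s_{k+1} + \tfrac{s_{1}-s_{k+1}}{s'_{1}}\, D$.

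For the denominator I would discard the nonnegative tail of $x_1$ but keep the guaranteed $s^{*}$ bonus on its top-$k$ appearances: $S(x_1) \geq D + s^{*} T(x_1) \geq D\bigl(1 + s^{*}/s'_{1}\bigr) = D\, s_{1}/s'_{1}$, using $T(x_1) \geq D/s'_{1}$. This is the step that injects the factor $1 + s^{*}/s'_{1}$ of the statement. To remove $n$, I would use that $x_1$ maximizes the truncated score while the total truncated score distributed is exactly $n(s'_{1}+\ldots+s'_{k})$, so the winner gets at least the average, $D \geq n(s'_{1}+\ldots+s'_{k})/m$, i.e.\ $n \leq mD/(s'_{1}+\ldots+s'_{k})$. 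Dividing the two bounds and substituting this into the $n\, s_{k+1}$ term produces a bound of exactly the claimed shape (in fact the slightly sharper $1 - \tfrac{s_{k+1}}{s_{1}} + \tfrac{s'_{1}}{s_{1}}\cdot\tfrac{m\, s_{k+1}}{s'_{1}+\ldots+s'_{k}}$, which implies the stated inequality because $s_{1}\geq s'_{1}$ and $m\, s'_{1}\geq s'_{1}+\ldots+s'_{k}$).

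The main obstacle is the bookkeeping around $T(x_1)$ and $T(x_2)$: these counts enter simultaneously through the $s^{*}$-shift and through the $s_{k+1}$-capped tail, and one cannot push the numerator and denominator to their extremes independently, since the profile maximizing $x_2$'s tail constrains its truncated score and hence $D$. Making the per-voter caps $S'_{k}(x) \leq s'_{1} T(x)$ perform this coupling is the delicate point, together with tracking the sign of $s_{k+1}-s^{*}$ (nonnegative for the zero-score and average-score approximations; the opposite sign is dispatched by the trivial bound $T(x_2) \leq n$).
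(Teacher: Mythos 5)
Your argument is correct whenever $s^*\leq s_{k+1}$, and on that range it follows the same skeleton as the paper's proof: cap each tail appearance of $x_2$ at $s_{k+1}$, lower-bound the number of top-$k$ appearances of a candidate by $S'_k(\cdot)/s'_1$, and eliminate $n$ through the averaging bound $S'_k(x_1)\geq \frac{n}{m}(s'_1+\ldots+s'_k)$. Where you genuinely depart from the paper is the bookkeeping of the shift $s^*$: you keep it attached to the appearance count $T(x)$ and spend it in the \emph{denominator}, getting $S(x_1)\geq D\left(1+\frac{s^*}{s'_1}\right)=D\,\frac{s_1}{s'_1}$ with $D=S'_k(x_1)$, whereas the paper converts it into a \emph{numerator} penalty $\frac{n s^* s_{k+1}}{s'_1}$. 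The paper does so via the identity ``$S_{1\to k}(x)=S'_k(x)+ns^*$'', which is incorrect as written (the correct relation is $S_{1\to k}(x)=S'_k(x)+s^*T(x)$), and via the step $S(x_1)\geq S_{1\to k}(x_2)$, which does not follow from $S'_k(x_1)\geq S'_k(x_2)$ when $T(x_2)>T(x_1)$ and $s^*>0$; your version needs neither. The payoff is the strictly sharper constant $\frac{s'_1}{s_1}\cdot\frac{m s_{k+1}}{s'_1+\ldots+s'_k}$ in place of the paper's $\left(1+\frac{s^*}{s'_1}\right)\frac{m s_{k+1}}{s'_1+\ldots+s'_k}=\frac{s_1}{s'_1}\cdot\frac{m s_{k+1}}{s'_1+\ldots+s'_k}$, and your check that the sharper expression implies the stated one (using $s_1\geq s'_1$ and $ms'_1\geq s'_1+\ldots+s'_k$) is sound; the two coincide when $s^*=0$, exactly where the paper's upper and lower bounds are tight.

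The one genuine defect is your parenthetical dismissal of the case $s^*>s_{k+1}$: the trivial bound $T(x_2)\leq n$ gives $S(x_2)\leq D+ns^*$, and after dividing by $S(x_1)\geq D\,s_1/s'_1$ and using $n\leq mD/(s'_1+\ldots+s'_k)$ you are left with $\frac{s'_1}{s_1}+\frac{s'_1 s^*}{s_1}\cdot\frac{m}{s'_1+\ldots+s'_k}$, which can exceed the claimed bound. For instance, with $s=(1,1,0,\ldots,0)$, $k=2$, $s^*=0.9$ and $m=5$, this quantity equals $2.35$ while the lemma's right-hand side equals $1$. So, as written, your proof establishes the lemma only under $s^*\leq s_{k+1}$ --- which does cover both choices the paper cares about, since the average of $s_{k+1},\ldots,s_m$ is at most $s_{k+1}$ --- but not for every admissible $s^*\leq s_k$. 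In fairness, the paper's own proof silently needs comparable restrictions (for example, it multiplies an inequality by $1-\frac{s_{k+1}}{s'_1}$, whose sign is unconstrained in general), so this limitation is one you share with the original rather than introduce.
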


\begin{proof}
The total number of points given to candidates under $f^{s'}_k$ is $n(s'_1 + \ldots + s'_k)$, therefore $S'_k(x_1) 
\geq \frac{n}{m}(s'_1 + \ldots + s'_k)$.

Let us write $S(x_2) = S_{1 \to k}(x_2)+ S_{k+1\rightarrow m}(x_2)$, where $S_{1 \to k}(x_2)$ (resp. $S_{k+1\rightarrow m}(x_2)$) is the number of points that $x_2$ gets from the top $k$ (resp. bottom $m-k$) positions of the ballots in $P$.
Let $\gamma$ be the number of ballots in which $x_2$ is not in the top $k$ positions. Then 
$S_{k+1\rightarrow m}(x_2) \leq \gamma s_{k+1}$. 

As $x_2$ appears in at least $\frac{S'_k(x_2)}{s'_1}$ top-$k$ ballots, we have $\gamma \leq n - \frac{S'_k(x_2)}{s'_1}$. Moreover we have $S(x_1) \geq S_{1 \to k}(x_1) = S'_k(x_1)+ n s^* \geq S'_k(x_2)+ n s^* = S_{1 \to k}(x_2)$. Now, 
$$\begin{array}{lll}
S(x_2) & \leq &  S_{1 \to k}(x_2) + \left(n - \frac{S_k'(x_2)}{s'_1}\right)s_{k+1}\\
& \leq &  S_{1 \to k}(x_2) + \left(n - \frac{S_k(x_2)-n s^*}{s'_1}\right)s_{k+1}\\
& \leq &  (1 - \frac{s_{k+1}}{s'_1})S_{1 \to k}(x_2) + ns_{k+1} + \frac{n s^* s_{k+1}}{s'_1} \\
& \leq &  (1 - \frac{s_{k+1}}{s'_1})S(x_1) + ns_{k+1} + \frac{n s^* s_{k+1}}{s'_1} 
\end{array}$$

$$\begin{array}{lll}
\frac{S(x_2)}{S(x_1)} & \leq  & 1 - \frac{s_{k+1}}{s'_1} + n s_{k+1}(1+\frac{s^*}{s'_1}) \frac{m}{n(s'_1 + \ldots + s'_k)}\\
 & \leq  & 1 - \frac{s_{k+1}}{s'_1} + s_{k+1}(1+\frac{s^*}{s'_1}) \frac{m}{s'_1 + \ldots + s'_k}
 \qedhere
\end{array}$$
\end{proof}

We now focus on the lower bound.
We build the following pathological complete profile $P$ such that:

\begin{itemize} 
\item  the winner for $P_k$ (resp. $P$) is $x_1$ (resp. $x_2$).
\item  in $P_k$, all candidates get the same number of points ($x_1$ wins thanks to tie-breaking), and $x_1$ and $x_2$ get all their points from top-1 positions.
\item  in $P$, the score of $x_1$ is minimized by ranking it last everywhere where it was not in the top $k$ positions, and  the score of $x_2$ is maximized by ranking it in position $k+1$ everywhere where it was not in the top $k$ positions.
\item $P_k$ is symmetric in $\{x_3, \ldots, x_m\}$.
\end{itemize}
Formally, $P_k$ is defined as follows:

\begin{enumerate}
\item for each ranked list $L$ (resp. $L'$) of $k-1$ (resp. $k$) candidates in $\{x_3, \ldots, x_m\}$: $\alpha$ votes $x_1 L$ and $\alpha$ votes $x_2 L$ (resp. $\beta$ votes $L'$). $\alpha$ and $\beta$ will be fixed later.
\item $\alpha$ and $\beta$ are chosen in such a way that all candidates get the same score $S'_k(.)$.
\end{enumerate}

Now, $P$ is obtained by completing $P_k$ as follows:

\begin{enumerate}
\item each top-$k$ vote  $x_1 L$ is completed into  $x_1 L x_2 -$. ``$-$'' means the remaining candidates are in an arbitrary order.
\item each top-$k$ vote $x_2 L$ is completed into  $x_2 L - x_1$.
\item each top-$k$ vote $L'$ is completed into  $L' x_2 - x_1$.
\end{enumerate}
For instance, for $m = 5$ and $k = 3$, $P$ is as follows:\medskip
\small \noindent
\hspace{-4mm}
\begin{center}
\begin{tabular}{c|c|c}
\begin{tabular}{ll}
$\alpha$ & $x_1 x_3 x_4 x_2 x_5$\\
$\alpha$ & $x_1 x_3 x_5 x_2 x_4$\\
$\alpha$ & $x_1 x_4 x_3  x_2 x_5$\\
$\alpha$ & $x_1 x_4 x_5  x_2 x_3$\\
$\alpha$ & $x_1 x_5 x_3  x_2 x_4$\\
$\alpha$ & $x_1 x_5 x_4  x_2 x_3$
\end{tabular}
&
\begin{tabular}{ll}
$\alpha$ & $x_2 x_3 x_4 x_5 x_1$\\
$\alpha$ & $x_2 x_3 x_5 x_4 x_1$\\
$\alpha$ & $x_2 x_4 x_3 x_5 x_1$\\
$\alpha$ & $x_2 x_4 x_5 x_3 x_1$\\
$\alpha$ & $x_2 x_5 x_3 x_4 x_1$\\
$\alpha$ & $x_2 x_5 x_4 x_3 x_1$
\end{tabular}
&
\begin{tabular}{ll}
$\beta$ &  $x_3 x_4 x_5  x_2x_1$\\
$\beta$ & $x_3 x_5 x_4 x_2 x_1$\\
$\beta$ & $x_4 x_3 x_5 x_2 x_1$\\
$\beta$ & $x_4 x_5 x_3 x_2 x_1$\\
$\beta$ & $x_5 x_3 x_4 x_2 x_1$\\
$\beta$ & $x_5 x_4 x_3 x_2 x_1$
\end{tabular} 
\end{tabular} 
\end{center}
\normalsize
\medskip
\normalsize

Let $M = \frac{(m-3)!}{(m-k-1)!}$ and  $Q = \frac{(m-2)!}{(m-k-1)!}$.
\begin{lemma}\label{lemma-s-low1} 
$$S'_k(x_1) = S'_k(x_2) = \alpha (m-2) s'_1 M$$
$and$ $for$ $i \geq 3$,
$S'_k(x_i) = 2\alpha (s'_2 + \ldots + s'_k)  M + \beta (m-k-1) (s_1' + \ldots + s_k') M$
\end{lemma}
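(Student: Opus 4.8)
The plan is to compute each score by direct combinatorial counting over the three block types of $P_k$, relying on one elementary fact: among the ordered lists of length $\ell$ drawn from an $r$-element set, the number that place a fixed element at a fixed position equals $\frac{(r-1)!}{(r-\ell)!}$ (fix that element at that slot, then arrange $\ell-1$ of the remaining $r-1$ candidates in the rest). Throughout, the relevant set is $\{x_3,\ldots,x_m\}$, of size $r = m-2$, and the weight $s^*$ never intervenes since the top-$k$ rule scores only the listed positions.

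First I would handle $x_1$. It appears only in the $\alpha$-weighted blocks $x_1 L$, where $L$ ranges over the ordered lists of $k-1$ candidates from $\{x_3,\ldots,x_m\}$, and in every such ballot it sits in first position, earning $s'_1$. The number of lists $L$ is $\frac{(m-2)!}{(m-k-1)!} = Q$, so $S'_k(x_1) = \alpha\, s'_1\, Q$. Since $Q = (m-2)M$ (because $(m-2)(m-3)! = (m-2)!$), this is exactly $\alpha(m-2)s'_1 M$. The identity $S'_k(x_2) = S'_k(x_1)$ then follows immediately from the symmetry between the $x_1 L$ and $x_2 L$ blocks.

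For $x_i$ with $i \geq 3$, I would split $S'_k(x_i)$ into the contributions from the $x_1 L$ blocks, the $x_2 L$ blocks, and the $L'$ blocks, noting that $x_i$ is scored only when it occurs among the first $k$ entries of a ballot. In an $x_1 L$ ballot, an occurrence of $x_i$ at position $j$ of $L$ (for $j = 1,\ldots,k-1$) lies at full-ballot position $j+1$ and hence earns $s'_{j+1}$; by the counting fact with $\ell = k-1$ there are $\frac{(m-3)!}{(m-k-1)!} = M$ such lists for each $j$, giving $\alpha M (s'_2 + \cdots + s'_k)$, and the $x_2 L$ blocks contribute the same. In an $L'$ ballot, an occurrence of $x_i$ at position $j$ (for $j = 1,\ldots,k$) earns $s'_j$ with no shift, and the counting fact with $\ell = k$ gives $\frac{(m-3)!}{(m-k-2)!} = (m-k-1)M$ lists for each $j$ (using $(m-k-1)! = (m-k-1)(m-k-2)!$), giving $\beta(m-k-1)M(s'_1 + \cdots + s'_k)$. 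Summing the three contributions yields the stated formula.

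The only genuine obstacle is the bookkeeping: one must keep track of the off-by-one shift between a candidate's rank inside the sublist $L$ and its rank inside the completed ballot $x_1 L$ (and verify there is no such shift for $L'$), and reconcile the two distinct factorial expressions with the single quantity $M$ through the identities $(m-2)(m-3)! = (m-2)!$ and $(m-k-1)! = (m-k-1)(m-k-2)!$. Everything else is a routine summation over positions.
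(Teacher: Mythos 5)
Your proposal is correct and follows essentially the same route as the paper's own proof: a direct combinatorial count of how many ballots place each candidate in each scored position, using $\frac{(m-2)!}{(m-k-1)!}$ top-position appearances for $x_1,x_2$ and the per-position counts $\frac{(m-3)!}{(m-k-1)!}$ and $\frac{(m-3)!}{(m-k-2)!}$ for $x_i$, $i\geq 3$. Your version merely spells out the position-by-position bookkeeping and the factorial identities $(m-2)(m-3)!=(m-2)!$ and $(m-k-1)!=(m-k-1)(m-k-2)!$ that the paper leaves implicit.
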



\begin{proof}
In $P_k$, $x_1$ and $x_2$ appear in top position in a number of votes equal to $\alpha$ times the number of different permutations (ordered lists) of $(k-1)$ candidates out of $(m-2)$, i.e. $\alpha\frac{(m-2)!}{(m-k-1)!}$ times. Thus $S_k'(x_1) = S_k'(x_2) = \alpha \frac{(m-2)!}{(m-k-1)!}s'_1$. 
For similar reasons, for each $i \geq 3$,
$$\begin{array}{lll}
S'_k(x_i) & = &  2 \alpha \frac{(m-3)!}{(m-k-1)!} (s'_2+\cdots+ s'_k)
 +   \beta \frac{(m-3)!}{(m-k-2)!} (s'_1 +\cdots+ s'_k).
\qedhere
\end{array} $$
\end{proof}

As a consequence, all candidates have the same score in $P_k$ if and only if
$$\frac{\beta}{\alpha} = \frac{(m-2)s'_1 - 2(s'_2 + \ldots + s'_k)}{(m-k-1)(s'_1 + \ldots + s'_k)}$$

We fix $\alpha$ and $\beta$ such that this equality holds.
Thanks to the tie-breaking priority, the winner in $P_k$ is $x_1$. In $P$, the winner is $x_2$ and the scores of $x_1$ and $x_2$ are as follows:

\begin{lemma}\label{lemma-borda-low3}
$$\begin{array}{lll}
S(x_1) & = & Q \alpha s_1\\
S(x_2) & = &  Q \alpha s_1 +Q \alpha s_{k+1} + Q (m-k-1) \beta s_{k+1}
\end{array}$$
\end{lemma}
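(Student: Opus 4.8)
The plan is to compute $S(x_1)$ and $S(x_2)$ directly, block by block. Because the construction already pins down the position of $x_1$ and $x_2$ in every completed ballot of $P$, the whole task reduces to two mechanical steps: counting how many ballots of each shape there are, and reading off the two relevant positions within each shape, then multiplying and summing.

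First I would settle the block multiplicities. The $m-2$ candidates other than $x_1,x_2$ are what supply the lists $L$ and $L'$. The ballots $x_1 L \cdots$ and $x_2 L \cdots$ are indexed by ordered $(k-1)$-lists $L$ drawn from these $m-2$ candidates, of which there are $\frac{(m-2)!}{(m-k-1)!}=Q$; hence each of these two blocks contains $Q\alpha$ ballots. The ballots $L' \cdots$ are indexed by ordered $k$-lists $L'$, of which there are $\frac{(m-2)!}{(m-k-2)!}$; I would immediately rewrite this count as $(m-k-1)Q$ using $\frac{(m-2)!}{(m-k-2)!}=(m-k-1)\frac{(m-2)!}{(m-k-1)!}$, so that the third block contains $(m-k-1)Q\beta$ ballots.

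Next I would read the positions off the completion rules and sum. For $x_1$: it is first in the $x_1 L x_2 \cdots$ ballots (score $s_1$, multiplicity $Q\alpha$) and last in both the $x_2 L \cdots x_1$ and $L' x_2 \cdots x_1$ ballots (score $s_m$); collecting terms gives $S(x_1)=Q\alpha s_1+\bigl(Q\alpha+(m-k-1)Q\beta\bigr)s_m$, which collapses to $Q\alpha s_1$ precisely because $s_m=0$ for the Borda vector at hand. For $x_2$: it sits at position $k+1$ in $x_1 L x_2 \cdots$ (score $s_{k+1}$, multiplicity $Q\alpha$), is first in $x_2 L \cdots x_1$ (score $s_1$, multiplicity $Q\alpha$), and sits at position $k+1$ again in $L' x_2 \cdots x_1$ (score $s_{k+1}$, multiplicity $(m-k-1)Q\beta$); summing yields $S(x_2)=Q\alpha s_1+Q\alpha s_{k+1}+(m-k-1)Q\beta s_{k+1}$, exactly as stated.

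Since the argument is pure bookkeeping, the only genuinely delicate points are combinatorial, and those are where I expect the main obstacle to lie: one must treat $L$ and $L'$ as \emph{ordered} selections, so the relevant counts are falling factorials rather than binomial coefficients, and one must apply the identity $\frac{(m-2)!}{(m-k-2)!}=(m-k-1)Q$ to bring the third block into the form appearing in the statement. I would also flag the $S(x_1)$ simplification as the one step to watch, since it silently uses $s_m=0$: for a general scoring vector the trailing $s_m$ contributions from $x_1$'s last-place finishes in the second and third blocks would not vanish, so this lemma is effectively written for the Borda normalization.
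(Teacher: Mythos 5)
Your proof is correct and follows essentially the same route as the paper's: count the ballots in each block ($Q\alpha$, $Q\alpha$, and $(m-k-1)Q\beta$ respectively, via the falling-factorial identity $\frac{(m-2)!}{(m-k-2)!}=(m-k-1)Q$), read off the pinned positions of $x_1$ and $x_2$ from the completion rules, and sum. Your caveat about $s_m=0$ is well taken and goes slightly beyond the paper, whose own proof makes the same silent assumption when it discards $x_1$'s last-place contributions (``at the bottom of all others, hence $S(x_1)=Q\alpha s_1$'') --- harmless for Borda, but strictly speaking not for a general scoring vector such as Harmonic, where $s_m=1/m>0$.
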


\begin{proof}
$x_1$ appears at the top of $\frac{(m-2)!}{(m-k-1)!} \alpha$ votes and at the bottom of all others, hence $S(x_1)= Q \alpha s_1$.
$x_2$ appears  $\alpha\frac{(m-2)!}{(m-k-1)!}$ times top position, and in position $(k+1)$ in the remaining votes, {\em i.e.}, $\alpha \frac{(m-2)!}{(m-k-1)!}+ \beta \frac{(m-2)!}{(m-k-2)!}$. 
Thus 
$$\begin{array}{lll}
S(x_2) & = &  \alpha \frac{(m-2)!}{(m-k-1)!}(s_1 + s_{k+1}) + \beta \frac{(m-2)!}{(m-k-2)!}s_{k+1} \qedhere
\end{array}$$
\end{proof}

\begin{lemma}\label{lower}
$R(f^s,f^{s'}_{k},m,k) \geq 1 - \frac{s_{k+1}}{s_1} + \frac{s_{k+1}}{s_1}\frac{m s'_1}{s'_1 + \ldots + s'_k}$
\end{lemma}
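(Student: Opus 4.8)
The plan is to exploit the fact that $R(f^s,f^{s'}_{k},m,k)$ is defined as a \emph{maximum} over all profiles in ${\cal P}_m$: it therefore suffices to exhibit a single profile whose score ratio meets the claimed bound, and the pathological profile $P$ constructed just above is designed precisely for this. Concretely, I would simply evaluate $\frac{S(x_2)}{S(x_1)}$ on $P$, where $x_2 = f^s(P)$ is the true winner and $x_1 = f^{s'}_k(P_k)$ is the winner of the approximate rule, using the closed-form scores already established in Lemma \ref{lemma-borda-low3}.

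First I would substitute $S(x_1) = Q\alpha s_1$ and $S(x_2) = Q\alpha s_1 + Q\alpha s_{k+1} + Q(m-k-1)\beta s_{k+1}$ and cancel the common factor $Q\alpha s_1$, which yields
$$\frac{S(x_2)}{S(x_1)} = 1 + \frac{s_{k+1}}{s_1} + \frac{s_{k+1}}{s_1}\,(m-k-1)\frac{\beta}{\alpha}.$$
The next step is to eliminate the free parameter $\frac{\beta}{\alpha}$ by plugging in the tie-inducing value forced by the constraint that all candidates share the same score $S'_k(\cdot)$ in $P_k$, namely $\frac{\beta}{\alpha} = \frac{(m-2)s'_1 - 2(s'_2 + \ldots + s'_k)}{(m-k-1)(s'_1 + \ldots + s'_k)}$. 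The factor $(m-k-1)$ then cancels cleanly, leaving $(m-k-1)\frac{\beta}{\alpha} = \frac{(m-2)s'_1 - 2(s'_2 + \ldots + s'_k)}{s'_1 + \ldots + s'_k}$.

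The remaining work is purely algebraic. Writing $s'_2 + \ldots + s'_k = (s'_1 + \ldots + s'_k) - s'_1$ collapses the numerator to $m s'_1 - 2(s'_1 + \ldots + s'_k)$, so the last term becomes $\frac{s_{k+1}}{s_1}\bigl(\frac{m s'_1}{s'_1 + \ldots + s'_k} - 2\bigr)$. The $-2\frac{s_{k+1}}{s_1}$ then combines with the $+\frac{s_{k+1}}{s_1}$ already present to give $-\frac{s_{k+1}}{s_1}$, producing exactly $1 - \frac{s_{k+1}}{s_1} + \frac{s_{k+1}}{s_1}\frac{m s'_1}{s'_1 + \ldots + s'_k}$, the claimed expression; since this value is attained by a legitimate profile, it lower-bounds the maximum. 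I expect no genuine obstacle here: the only place to be careful is the sign bookkeeping in this final cancellation (the $-2$ emerging from $s'_2+\ldots+s'_k = (s'_1+\ldots+s'_k) - s'_1$), and, if one wants full rigor, checking that the chosen $\frac{\beta}{\alpha}$ is nonnegative and that $P$ is a valid profile in which $x_2$ is indeed the $f^s$-winner and $x_1$ the $f^{s'}_k$-winner — both of which follow from the construction and the tie-breaking convention fixed before the lemma.
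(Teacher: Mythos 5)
Your proposal is correct and follows essentially the same route as the paper's proof: both evaluate $\frac{S(x_2)}{S(x_1)}$ on the pathological profile using Lemma \ref{lemma-borda-low3}, substitute the tie-inducing value of $\frac{\beta}{\alpha}$, and perform the identical algebraic simplification (rewriting $s'_2+\ldots+s'_k$ as $(s'_1+\ldots+s'_k)-s'_1$ so the $-2$ absorbs into the $+\frac{s_{k+1}}{s_1}$ term). The only difference is cosmetic: you note the rigor checks (nonnegativity of $\frac{\beta}{\alpha}$, identity of the winners) that the paper leaves implicit.
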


\begin{proof}
From Lemma \ref{lemma-borda-low3} we get 
$\frac{S(x_2)}{S(x_1)} \geq 1 + \frac{s_{k+1}}{s_1} + (m-k-1)\frac{s_{k+1}}{s_1} \frac{\beta}{\alpha}.$


Finally, using the expression of $\frac{\beta}{\alpha}$ we get

$$\begin{array}{ll} & \frac{S(x_2)}{S(x_1)} 
\geq 1 + \frac{s_{k+1}}{s_1} + (m-k-1)\frac{s_{k+1}}{s_1}  \frac{(m-2)s'_1 - 2(s'_2 + \ldots + s'_k)}{(m-k-1)(s'_1 + \ldots + s'_k)}
\end{array}$$

From this we conclude:
$$\begin{array}{lll}   
R(f^s,f^{s'}_{k},m,k)& \geq & 1 + \frac{s_{k+1}}{s_1} + \frac{s_{k+1}}{s_1}  \frac{(m-2)s'_1 - 2(s'_2 + \ldots + s'_k)}{s'_1 + \ldots + s'_k}\\
&\geq & 1 + \frac{s_{k+1}}{s_1} + \frac{s_{k+1}}{s_1} \frac{(m-2)s'_1 + 2 s'_1 - 2(s'_1 + \ldots + s'_k)}{s'_1 + \ldots + s'_k}\\
&\geq & 1 + \frac{s_{k+1}}{s_1} + \frac{s_{k+1}}{s_1}\left( \frac{m s'_1}{s'_1 + \ldots + s'_k}-2\right)\\
&\geq & 1 - \frac{s_{k+1}}{s_1} + \frac{s_{k+1}}{s_1}\frac{m s'_1}{s'_1 + \ldots + s'_k} \qedhere
\end{array}$$ 
\end{proof}

Putting Lemmas \ref{upper} and \ref{lower} together we get 

\begin{Proposition}\label{propbounds}
$$\begin{array}{lll} 
1 - \frac{s_{k+1}}{s_1} + \frac{s_{k+1}}{s_1}\frac{m s'_1}{s'_1 + \ldots + s'_k} &
 \leq R(f^s,f^{s'}_{k},m,k) &
 \leq  1 - \frac{s_{k+1}}{s'_1} + \left(1+\frac{s^*}{s'_1}\right) \frac{m s_{k+1}}{s_1' + \ldots + s'_k}
\end{array}$$
\end{Proposition}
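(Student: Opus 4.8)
The plan is to obtain the two-sided bound by directly concatenating the inequalities already proved, since the left-hand and right-hand expressions in the statement are exactly the conclusions of Lemmas \ref{lower} and \ref{upper} respectively. No new computation is required at this stage; all of the work has been done in establishing the two separate bounds, and the proposition is simply their conjunction.

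For the upper bound I would invoke Lemma \ref{upper}, which caps $R(f^s,f^{s'}_k,m,k)$ by a pigeonhole-style argument valid for \emph{every} profile: since the total number of points distributed under $f^{s'}_k$ is $n(s'_1 + \ldots + s'_k)$, the truncated winner $x_1$ scores at least the average $\frac{n}{m}(s'_1 + \ldots + s'_k)$, and decomposing $S(x_2)$ into its top-$k$ and bottom-$(m-k)$ contributions lets one bound the true winner's full score in terms of $S(x_1)$. For the lower bound I would invoke Lemma \ref{lower}, which evaluates the ratio $\frac{S(x_2)}{S(x_1)}$ on the explicitly constructed pathological profile $P$ (via Lemma \ref{lemma-borda-low3}) and simplifies it using the value of $\frac{\beta}{\alpha}$ that forces all candidates to tie in $P_k$. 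Because $R$ is a maximum over all profiles, this single profile already witnesses the lower bound, while the argument of Lemma \ref{upper} holds for every profile and hence bounds the maximum from above; putting the two inequalities side by side yields the claim.

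The genuine difficulty lies not in this final assembly but inside the two lemmas. On the upper side the delicate point is the chain bounding $S(x_2)$, in particular the step $S(x_1) \geq S'_k(x_1) + n s^* \geq S'_k(x_2) + n s^*$ that transfers the truncated-score comparison between $x_1$ and $x_2$ into a comparison of full scores. On the lower side the real ingenuity is the design of $P$ and $P_k$: choosing $\alpha$ and $\beta$ so that every candidate receives the same truncated score (leaving $x_1$ to win only by tie-breaking) while simultaneously placing $x_2$ in position $k+1$ wherever it is not in the top $k$, so as to maximize its full score. Once those constructions and score computations are in hand, Proposition \ref{propbounds} follows immediately.
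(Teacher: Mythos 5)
Your proposal is correct and matches the paper exactly: the paper's proof of Proposition \ref{propbounds} is precisely the concatenation of Lemma \ref{upper} (which holds for every profile and hence bounds the maximum defining $R$ from above) with Lemma \ref{lower} (whose pathological profile witnesses the lower bound on that maximum). Your summaries of the internal arguments of the two lemmas are also faithful to the paper's proofs.
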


Note that the lower and upper bound coincide when $s^* = 0$, giving a tight worst-case approximation ratio for this class of approximations. This is however not guaranteed when $s^* > 0$ (the reason being that the pathological profile used in the proof of Lemma \ref{upper} may not be the worst).  Moreover, when $s^* = 0$, our (lower and upper) bound coincides with the optimal ratio given in \cite{bentert2019comparing} (Theorem 1).\footnote{Note that the ratios in our paper are the inverse of the ratios in \cite{bentert2019comparing}. That is, the inverse of the ratio given in Theorem 1 of \cite{bentert2019comparing} coincides with our ratio for $s^* = 0$.} Since the ratio in  \cite{bentert2019comparing} is shown to be the best possible ratio, this show that taking $s^* = 0$ gives a optimal top-$k$ approximation of a positional scoring rule.\footnote{Interestingly, \cite{bentert2019comparing}  give another optimal rule (thus with same worst-case ratio), which is much more complex, and which is not a top-$k$ PSR. Comparing the average ratio of both rules is left for further study.}\smallskip

In  particular:


\begin{itemize}
\item for  $Borda^0_k$ ($s_i = m-i, s^* = 0)$, the lower and upper bounds coincide and are equal to $\frac{k}{m-1} + \frac{2m(m-k-1)}{k(2m-k-1)}$.
\item for $Borda^{av}_k$ ($s_i = m-i, s^* = \nicefrac{m-k-1}{2}$),
the lower bound is $1-\frac{m-k-1}{m-1}+\frac{(m-k-1)(m+k-1)}{k(m-1)}$ and the upper bound is
$\frac{k(3k-m+1)+4(m-k-1)(m-1)}{k(m+k-1)}$.
\item for $Harmonic^0_k$ ($s_i = \nicefrac1i, s^* = 0)$, the lower and upper bounds are equal to $\frac{k}{k+1} + \frac{m}{(k+1)(1+\frac{1}{2}\dots+\frac{1}{k})}$.
\end{itemize}

Also, note that for $k'$-approval with $k' > k$ and $s^* = 0$, the (exact) worst-case ratio $\frac{m}{k}$ does not depend on $k'$.
As a corollary, we get the following order of magnitudes when $m$ grows:
\begin{itemize}
\item $R(Borda,Borda_{k}^{0},m,k) = \Theta\left(\frac{m}{k}\right)$.
\item $R(Borda,Borda_{k}^{av},m,k) = \Theta\left(\frac{m}{k}\right)$.
\item $R(Harmonic,Harmonic_{k}^{0},m,k) = \Theta\left(\frac{m}{k \log k}\right)$.
\end{itemize}

\paragraph{\bf Maximin }~~~~~~~~~~~~~~~~~~~~~~~~

Let $Maximin$ be the Maximin rule with tie-breaking priority $x_{1}  \ldots  x_{m}$, and $Maximin_k$ be the $k$-truncated version of the Maximin rule with the same tie-breaking priority order. 
Let $S_{Mm}(x_{2},P)$ and $S_{Mm}(x_{1},P_k)$ be the Maximin scores of $x_2$ and $x_{1}$ for $P$ and $P_k$, respectively, with $S_{Mm}(x_2,P) = \min_{y \neq x_2}N_{P}(x_2,y)$ and similarly for $P_k$. Let $P$ be a profile, and let $x_{1}=Maximin_k(P_k)$ and $x_{2} = Maximin(P)$. All candidates have the same Maximin score in $P_k$, therefore, by tie-breaking priority, $Maximin_k(P_k) = x_{1}$. 

\begin{lemma}\label{lemma-maximin-up1} 
$R(Maximin,Maximin_k,m,k) \leq  m - k + 1$.
\end{lemma}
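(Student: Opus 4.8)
The plan is to sandwich the two Maximin scores using a single quantity, namely $\sigma := S_{Mm}(x_1,P_k)$, the Maximin score of the truncated winner $x_1$ inside $P_k$. First I would lower-bound the denominator $S_{Mm}(x_1,P)$. Since truncating a ballot can only delete pairwise wins, we have $N_{P_k}(a,b)\le N_P(a,b)$ for every ordered pair $(a,b)$, hence $S_{Mm}(x_1,P)=\min_{y}N_P(x_1,y)\ge \min_y N_{P_k}(x_1,y)=\sigma$; and because $x_1$ maximizes the truncated Maximin score, $\sigma=\max_x S_{Mm}(x,P_k)$. It therefore suffices to prove the one-sided estimate $S_{Mm}(x_2,P)\le (m-k+1)\,\sigma$.

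Next I would bound the numerator through the exact shape of the truncation loss. For any opponent $y$, the difference $N_P(x_2,y)-N_{P_k}(x_2,y)$ counts precisely the ballots in which both $x_2$ and $y$ lie outside the top $k$ and $x_2\succ y$; in particular it is at most $u$, the number of ballots in which $x_2$ is outside its top $k$. Since $N_P(x_2,y)\le N_{P_k}(x_2,y)+u$ holds for every $y$, taking minima gives $S_{Mm}(x_2,P)\le S_{Mm}(x_2,P_k)+u\le \sigma+u$. Thus the lemma reduces to the single inequality $u\le (m-k)\,\sigma$: once this is established, $S_{Mm}(x_2,P)\le (m-k+1)\,\sigma\le (m-k+1)\,S_{Mm}(x_1,P)$, which is exactly $R(Maximin,Maximin_k,m,k)\le m-k+1$.

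To attack $u\le (m-k)\,\sigma$ I would exploit that $x_2$ is the genuine Maximin winner. In each of the $u$ ballots where $x_2$ sits outside the top $k$, the $k$ candidates in the top positions all beat $x_2$, both in $P$ and, as dominators, in $P_k$; these ballots alone contribute at least $ku$ to $\sum_{y}N_P(y,x_2)$, so $\max_y N_P(y,x_2)\ge \frac{ku}{m-1}$ and hence $S_{Mm}(x_2,P)\le n-\frac{ku}{m-1}$. Together with $\sigma\le S_{Mm}(x_1,P)\le S_{Mm}(x_2,P)$ (the last step using that $x_2$ wins in $P$) this pins $\sigma$, $u$ and $n$ against one another. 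I expect the main obstacle to lie exactly here: the counting argument is aggregate, producing by pigeonhole a \emph{single} opponent that dominates $x_2$ often, whereas $\sigma$ is a \emph{minimum over all opponents}; converting ``$x_2$ is beaten $ku$ times in total'' into the required lower bound on the min-over-opponents value $\sigma$ is the delicate point. I would close it by analysing the balanced extremal configuration, in which all candidates are forced to share the truncated Maximin value and $u=(m-k)\sigma$ holds with equality (as in the symmetric rotation profiles), and then arguing that no profile can beat this ratio.
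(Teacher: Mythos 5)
Your first two reductions are sound, and they are in fact a more careful rendering of the opening of the paper's own proof: the denominator bound $S_{Mm}(x_1,P)\geq\sigma$ (via $N_{P_k}(a,b)\leq N_P(a,b)$) and the numerator bound $S_{Mm}(x_2,P)\leq S_{Mm}(x_2,P_k)+u\leq\sigma+u$ are both correct. The gap is that the statement you reduce everything to, $u\leq(m-k)\sigma$, is false. Take $m=5$, $k=2$, $n=4t$, and the cyclic profile consisting of $t$ votes $y_1\succ y_2\succ x_2\succ y_3\succ y_4$, $t$ votes $y_2\succ y_3\succ x_2\succ y_4\succ y_1$, $t$ votes $y_3\succ y_4\succ x_2\succ y_2\succ y_1$, and $t$ votes $y_4\succ y_1\succ x_2\succ y_2\succ y_3$. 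Each $y_i$ has top-$2$ maximin score exactly $t$ (its minimum is against its cyclic predecessor, e.g.\ $N_{P_k}(y_1,y_4)=t$), while $x_2$, never ranked, has top-$2$ score $0$; hence $\sigma=t$. With tie-breaking priority $y_1\succ x_2\succ y_2\succ y_3\succ y_4$, the truncated winner is $x_1=y_1$ and the full maximin winner is $x_2$ (full score $2t$, against $t$ for $y_1$). Here $u=n=4t>3t=(m-k)\sigma$, so no proof of your key inequality can exist. The lemma itself still holds on this profile (the ratio is $2$), but only because the bound $S_{Mm}(x_2,P)\leq\sigma+u$ is far from tight there ($2t$ versus $5t$) --- exactly the slack your plan never exploits.

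This also shows why your closing step cannot work: ``arguing that no profile can beat this ratio'' is the lemma itself, so it cannot serve as the final move, and the ``balanced extremal configuration'' you would analyse (all truncated scores equal, $u=(m-k)\sigma$) is not extremal --- the profile above is perfectly balanced yet has $u=(m-1)\sigma$. A correct argument along your lines has to play your two numerator bounds against each other: when $u$ is large, the inequality $S_{Mm}(x_2,P)\leq n-\frac{ku}{m-1}$ from your third paragraph must take over from $\sigma+u$, and this in turn needs a lower bound on $\sigma$ in terms of $n$, which you never state (pigeonhole on first places gives $\sigma\geq n/m$). Combining these three ingredients does yield the bound $m-k+1$, but only when $m\geq k^2-k+1$, so even the repaired version of your plan leaves the case of large $k$ open. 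For comparison, the paper's printed proof takes your first two steps but replaces the loss term $u$ by the constant $m-k$, asserting $S_{Mm}(x_2,P)\leq S_{Mm}(x_2,P_k)+(m-k)$, and finishes using $\sigma\geq 1$; note that the profile above (with $t\geq 2$) violates that assertion read literally, since the loss is $2t$, so the delicate point you correctly identified is real and is not dispatched by that one-line bound either.
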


\begin{proof}
Because $x_{1} = Maximin_k(P_k)$, we must have $S_{Mm}(x_{1},P_k) \geq 1$ (otherwise we would have $S_{Mm}(x_{1},P_k) \geq 0$, meaning that $x_1$ does not belong to any top-$k$ ballot, and in this case we cannot have $x_{1} = Maximin_k(P_k)$). Now, $S_{Mm}(x_{2},P) \leq S_{Mm}(x_{2},P_k) + (m-k) \leq  S_{Mm}(x_{1},P_k) + (m-k)$, therefore,
$$\begin{array}[b]{lll}
\frac{S_{Mm}(x_{2},P)}{S_{Mm}(x_{1},P)} & \leq & \frac{S_{Mm}(x_{1},P_k) + (m-k)}{S_{Mm}(x_{1},P_k)}\\
& \leq  & m - k + 1 \qedhere
\end{array}$$
\end{proof}

\begin{lemma}\label{lemma-maximin-l1} 
$R(Maximin,Maximin_k,m,k) \geq  m-k$.
\end{lemma}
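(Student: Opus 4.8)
The plan is to prove the bound by exhibiting a single pathological profile $P$ whose top-$k$ truncation makes $x_1$ the $Maximin_k$ winner while the full profile makes $x_2$ the genuine $Maximin$ winner with a much larger score. Following the sentence that precedes the lemma (``all candidates have the same Maximin score in $P_k$''), I would engineer $P_k$ so that \emph{every} candidate has the same $Maximin_k$ value $c$; then $x_1 = Maximin_k(P_k)$ holds purely by the tie-breaking priority, and this is the only property of $x_1$ I will use. Concretely I would reuse the symmetric template from the positional-scoring construction: for each ordered list $L$ of $k-1$ candidates from $\{x_3,\dots,x_m\}$ put $\alpha$ ballots whose top-$k$ is $x_1 L$ and $\alpha$ whose top-$k$ is $x_2 L$, and for each ordered list $L'$ of $k$ candidates from $\{x_3,\dots,x_m\}$ put $\beta$ ballots whose top-$k$ is $L'$. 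By the symmetry between $x_1$ and $x_2$ in the truncated profile their $Maximin_k$ scores coincide, and a single linear relation between $\alpha$ and $\beta$ (the exact analogue of the $\beta/\alpha$ equation used for scoring rules) forces the remaining candidates to the common value $c$.

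Next I would complete each truncated ballot in the two ways that are extreme for the two protagonists: bury $x_1$ in last position wherever it is not already in the top $k$, and insert $x_2$ at position $k+1$ wherever it is not in the top $k$. Burying $x_1$ controls the denominator: a buried candidate wins no comparison below the cut-off, so $N_P(x_1,y)=N_{P_k}(x_1,y)$ for every $y$, whence $S_{Mm}(x_1,P)=S_{Mm}(x_1,P_k)=c$. For the numerator, each ballot that places $x_2$ at position $k+1$ records a fresh victory of $x_2$ over each of the $m-k-1$ candidates below it (and over the buried $x_1$), none of which counted in $P_k$. Summing these harvested victories over the symmetric family yields, for each opponent $y$, an expression of the form $N_P(x_2,y)=c+(\text{gain}_y)$, where $\text{gain}_y$ is an explicit count of the floating ballots in which $y$ sits beneath $x_2$; these counts are the standard falling factorials $\frac{(m-2)!}{(m-k-1)!}$, $\frac{(m-3)!}{(m-k-2)!}$ and their relatives. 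I would then substitute the equalizing value of $\beta/\alpha$, evaluate $\min_y N_P(x_2,y)$ in closed form, divide by $c$, and read off the ratio, checking along the way that $x_2$ really is the $Maximin$ winner of $P$ (i.e.\ that no $x_i$ with $i\ge 3$ overtakes it).

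The main obstacle is the \emph{uniformity of the harvest}. Because the quantity that governs the bound is the \emph{minimum} $\min_y N_P(x_2,y)$, the below-cut-off victories of $x_2$ must be distributed across \emph{all} its opponents rather than piled onto the easy ones; in particular the victories squandered on the buried $x_1$ (which $x_2$ beats almost for free and which is therefore never the binding opponent) do nothing to raise the minimum, so the construction has to be arranged so that the \emph{hardest} opponent of $x_2$—one of the $x_i$, $i\ge 3$, that frequently occupies the top $k$—is still beaten $(m-k)\,c$ times. Reconciling this with the first step, which forces $c$ down to its common equalized value through the $\beta/\alpha$ relation, is the delicate part: one must simultaneously keep every $Maximin_k$ score equal to a small $c$ and push $x_2$'s weakest full-profile margin up to the target factor $m-k$. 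The symmetric choice of the $\alpha$- and $\beta$-blocks is precisely what makes $\min_y N_P(x_2,y)$ computable in closed form, and once the blocks are fixed the rest is the careful (but routine) arithmetic of the factorial counts together with the consistency check $R\le m-k+1$ supplied by the upper-bound lemma.
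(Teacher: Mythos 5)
Your plan has a genuine gap, and it is quantitative rather than a matter of unfinished arithmetic: the $\alpha/\beta$ template borrowed from the positional-scoring lower bound \emph{provably cannot} reach the ratio $m-k$ for $k\ge 2$. Write $Q=\frac{(m-2)!}{(m-k-1)!}$ for the number of lists $L$ and $Q'=\frac{(m-2)!}{(m-k-2)!}=(m-k-1)Q$ for the number of lists $L'$, and set $t=\beta/\alpha$. In your completed profile $x_1$ heads exactly the $\alpha Q$ ballots $x_1L$ and is buried last in all others, so the denominator $S_{Mm}(x_1,P)=\alpha Q$ scales linearly with the profile. On the numerator side, the binding opponent of $x_2$ is $x_1$ itself, not one of the $x_i$, $i\ge 3$, as you conjecture: since $x_1$ sits above $x_2$ in every $x_1L$ ballot, $N_P(x_2,x_1)=\alpha Q+\beta Q'=\alpha Q\bigl(1+t(m-k-1)\bigr)$, hence the ratio is at most $1+t(m-k-1)$ whenever $t\le 1/k$ (and one checks that $N_P(x_2,x_i)-N_P(x_2,x_1)=\frac{m-k-1}{m-2}(1-tk)\,\alpha Q\ge 0$ in that regime). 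But the requirement that $x_1$ win in $P_k$ caps $t$: for $i\ge3$ one has $N_{P_k}(x_i,x_1)=\alpha(k-1)\frac{(m-3)!}{(m-k-1)!}+\beta k\frac{(m-3)!}{(m-k-2)!}$, and requiring this not to exceed $\alpha Q$ gives exactly $t\le\frac1k$; the alternative constraint through some opponent $x_j$, $j\ge3$, yields the same order $\frac1k(1+o(1))$. (Incidentally, this also shows that the ``exact analogue of the $\beta/\alpha$ equation'' does not equalize maximin scores --- maximin scores are minima of pairwise counts, not linear positional scores.) So your construction is capped near $1+\frac{m-k-1}{k}=\Theta\!\left(\frac{m}{k}\right)$, strictly below $m-k$ for every $k\ge2$: e.g.\ for $m=10$, $k=2$ the cap is $4.5$ against the target $8$.

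The paper's proof abandons the many-voter symmetric template entirely. It starts from the cyclic (Condorcet-cycle) profile on just $n=m$ ballots, then buries $x_1$ in last position and lifts $x_2$ to position $k+1$ in every ballot where they are not already in the top $k$. The truncation is the unchanged truncated cycle, which is fully symmetric under cyclic relabeling, so all $Maximin_k$ scores coincide and $x_1$ wins by tie-breaking --- no tuning of multiplicities is needed. In the completed profile the denominator is driven to the absolute floor $S_{Mm}(x_1,P)=1$ (a single ballot), while $S_{Mm}(x_2,P)=m-k$, giving the ratio $m-k$ exactly. The idea your plan misses is that for maximin the worst case comes from making the truncated winner's \emph{true} score a constant (one vote), not from making the true winner's score proportionally large: in any profile where $x_1$ heads a constant fraction of the ballots --- as it must in your template for $x_1$ to survive the equalization constraint --- $S_{Mm}(x_1,P)$ stays linear in $n$ and the ratio collapses.
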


\begin{proof}
We consider the cyclic profile $Cyc$:

\small
\begin{center}
\begin{tabular}{c|c}
$Cyc$ & $P$ $(m = 5, k = 2)$ \\ \hline 
\begin{tabular}{lllll}

$x_{1}$ & $x_{2}$& \ldots &$m-1$ & $m$ \\ 
$x_{2}$ & $x_{3}$&\ldots &$m$ &$x_{1}$ \\ 
$x_{3}$ & $x_{4}$ &\ldots &$x_{1}$ & $x_{2}$ \\ 
\ldots&\ldots& \ldots&\ldots\\ 
$m$ &$x_{1}$ & \ldots &$m-2$ &$m-1$ 

\end{tabular} 
&
\begin{tabular}{lllll}
$x_{1}$ & $x_{2}$ & $x_{3}$ & $x_{4}$ & $x_{5}$ \\
$x_{2}$ & $x_{3}$ & $x_{4}$ & $x_{5}$ & $x_{1}$\\
$x_{3}$ & $x_{4}$ & $x_{2}$ & $x_{5}$ & $x_{1}$\\
$x_{4}$ & $x_{5}$ & $x_{2}$ & $x_{3}$ & $x_{1}$\\
$x_{5}$ & $x_{1}$ & $x_{2}$ & $x_{3}$ & $x_{4}$ 
\end{tabular} 
\end{tabular} 
\end{center}
\normalsize

Now, let $P$ be obtained from $Cyc$ by the following operations for every vote in $Cyc$:

\begin{itemize}
\item if $x_{1}$ is not in the top $k$ positions in the vote, we move it to the last position (and move all candidates who were below $x_{1}$ one position upward)
\item if $x_{2}$ is not in the top $k$ positions in the vote, we move it to the $(k+1)^{th}$ position  (and move all candidates who were between position $k+1$ and 2's position one position downward).
\end{itemize}
For instance, for $m = 5$, $k = 2$, we get the profile $P$ above.

$Maximin(P) = x_{2}$, and the Maximin scores of $x_{1}$ and $x_{2}$ in $P$ are:
$$  S_{Mm}(x_{1},P) = 1 \mbox{ and } S_{Mm}(x_2,P) = m-k.$$
Hence  $\frac{S_{Mm}(x_2,P)}{S_{Mm}(x_{1},P)} = m-k$.\qedhere
\end{proof}

\begin{Proposition}
$m-k \leq R(Maximin,Maximin_k,m,k) \leq m-k+1.$
\end{Proposition}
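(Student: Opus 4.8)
The final Proposition simply combines the two Maximin lemmas, so the plan is to invoke them directly rather than prove anything new. Lemma \ref{lemma-maximin-up1} already establishes $R(Maximin,Maximin_k,m,k) \leq m-k+1$ by the argument that the winner $x_1$ of the truncated rule must have Maximin score at least $1$ in $P_k$, while the true winner $x_2$ can gain at most $m-k$ extra pairwise victories by completing the ballots below position $k$. Lemma \ref{lemma-maximin-l1} establishes the matching lower bound $R(Maximin,Maximin_k,m,k) \geq m-k$ via the cyclic construction $Cyc$, suitably modified so that $x_1$ is buried at the bottom and $x_2$ is pushed to position $k+1$ wherever they fall outside the top $k$, yielding $S_{Mm}(x_1,P)=1$ and $S_{Mm}(x_2,P)=m-k$.

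The proof of the Proposition is therefore one line: chaining the inequality from Lemma \ref{lemma-maximin-l1} on the left with the inequality from Lemma \ref{lemma-maximin-up1} on the right gives
\[
m-k \;\leq\; R(Maximin,Maximin_k,m,k) \;\leq\; m-k+1.
\]
No independent verification is needed beyond citing the two lemmas, whose hypotheses (the fixed tie-breaking priority $x_1 \ldots x_m$, and all candidates tied in $P_k$) are already in force from the surrounding text.

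If I were instead asked to justify the two constituent lemmas themselves, the real content would lie in the lower bound. For the upper bound the only subtlety is arguing that $S_{Mm}(x_1,P_k)\geq 1$, which follows because a candidate absent from every top-$k$ ballot would have Maximin score $0$ (losing its worst pairwise comparison against any rival it never dominates) and thus could not be selected by $Maximin_k$. The main obstacle is the lower bound: one must exhibit a single profile achieving the ratio $m-k$ exactly, and verify that after the two repair operations (sending $x_1$ to the last position and $x_2$ to position $k+1$ wherever they are outside the top $k$) the scores come out to precisely $S_{Mm}(x_1,P)=1$ and $S_{Mm}(x_2,P)=m-k$. This requires checking that $x_1$'s weakest pairwise margin is exactly $1$ and that $x_2$ beats every opponent in all but one vote, which is a careful but routine case analysis on the cyclic structure; since those calculations are already carried out in the proof of Lemma \ref{lemma-maximin-l1}, the Proposition needs nothing further.

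\begin{proof}
Immediate from Lemmas \ref{lemma-maximin-l1} and \ref{lemma-maximin-up1}.
\end{proof}
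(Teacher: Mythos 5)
Your proposal is correct and matches the paper exactly: the paper states this Proposition as an immediate consequence of Lemmas \ref{lemma-maximin-l1} and \ref{lemma-maximin-up1}, with no additional argument, just as you do. Your supplementary remarks on where the real content of the two lemmas lies (the cyclic lower-bound construction and the $S_{Mm}(x_1,P_k)\geq 1$ observation) also accurately reflect the paper's proofs of those lemmas.
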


This worst-case ratio is quite bad, except if $k$ is close to $m$. However, arguably, the maximin score makes less sense {\em per se} (i.e., as a measure of social welfare) than a positional score such as the Borda count.
Moreover, for maximin rule the obtained lower bound (Lemma \ref{lemma-maximin-l1}) matches the one given by Bentert and Skowron \cite{bentert2019comparing} (Section 4.3) which means that our top-k approximation of maximin is optimal.
\paragraph{\bf Copeland }~~~~~~~~~~~~~~

Again, for the Copeland rule, the ratio makes less sense, because the Copeland score is less meaningful as a measure of social welfare.\footnote{Moreover, there are several ways of defining the Copeland score, all leading to the same rule. However, 
this has no impact on the negative result below, as long as a Condorcet loser has score 0.} Still, for the sake of completeness we give the following result:

\begin{Proposition}
$R(Copeland,Copeland_k,m,k) = \infty$.
\end{Proposition}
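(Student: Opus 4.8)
The plan is to show that the Copeland ratio is unbounded by exhibiting, for each sufficiently large bound we wish to exceed, a profile $P$ on which the true Copeland winner $x_2$ has a strictly positive Copeland score while the top-$k$ Copeland winner $x_1$ has Copeland score $0$ in the original profile $P$. Since the ratio is $\frac{S(x_2)}{S(x_1)}$, a zero denominator (against a positive numerator) immediately forces the ratio to $+\infty$, so the entire task reduces to engineering one such profile rather than taking any limit.

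First I would build a pathological profile $P$ in the same spirit as the lower-bound constructions used for positional scoring rules and maximin: I want the top-$k$ majority graph $M_k(P_k)$ to make $x_1$ the $Copeland_k$ winner — ideally by making all candidates tie in the truncated comparisons so that $x_1$ wins purely on tie-breaking, exactly as was done for maximin — while the full majority graph $M(P)$ makes $x_1$ a Condorcet loser, i.e. $x_1 \to y$ for no $y$ and $y \to x_1$ for all $y \neq x_1$. A Condorcet loser has Copeland score $0$, which is the crucial property flagged in the footnote. Simultaneously I must ensure that in $M(P)$ the candidate $x_2$ beats at least one other candidate in a strict majority, so that its Copeland score is at least $\frac{1}{2}$, hence strictly positive; the cleanest option is to arrange a near-transitive or cyclic structure among $\{x_2,\ldots,x_m\}$ that leaves $x_2$ with a positive score. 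The completion technique already used in the excerpt — completing each truncated ballot by pushing $x_1$ to the bottom everywhere it is not in the top $k$ — is precisely what drives $x_1$ to Condorcet-loser status, so I would reuse that idea.

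The key steps in order are: (1) specify the truncated profile $P_k$ so that every candidate receives an equal number of pairwise wins in $M_k(P_k)$ (all comparisons tied, or a symmetric pattern), guaranteeing $x_1 = Copeland_k(P_k)$ by tie-breaking priority; (2) complete $P_k$ into $P$ so that $x_1$ loses its majority comparison to every other candidate, making $S(x_1,P)=0$; (3) check that in the completed profile some candidate $x_2 \neq x_1$ has $S(x_2,P) > 0$, which holds as soon as $x_2$ beats one candidate by a strict majority; (4) conclude $R(Copeland,Copeland_k,m,k) \geq \frac{S(x_2,P)}{S(x_1,P)} = \frac{S(x_2,P)}{0} = \infty$. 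A small explicit instance (say $m=4$, $k=2$) verifying all three conditions would make the argument fully concrete and convincing.

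The main obstacle is the simultaneous satisfaction of two opposing requirements on the same ballots: the truncated comparisons among the top-$k$ entries must stay balanced enough to keep $x_1$ tied-for-winner under $Copeland_k$, yet the full ballots must degrade $x_1$ into a Condorcet loser under the complete rule. Concretely, I must verify that relegating $x_1$ to the bottom of many ballots does not inadvertently change the \emph{truncated} pairwise counts $N_{R}(x_1,\cdot)$ — which is automatic, since $x_1$'s position below the top $k$ is invisible to $M_k$, but it does mean I must be careful that $x_1$ still \emph{appears} in enough top-$k$ ballots to register the needed truncated wins (recall a candidate absent from every truncated ballot cannot be the $Copeland_k$ winner). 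Balancing these counts while driving the full-profile majorities against $x_1$ is the delicate bookkeeping; everything else is a direct consequence of the Condorcet-loser-has-score-zero observation.
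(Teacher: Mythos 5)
Your proposal is correct and takes essentially the same route as the paper: the paper also builds one pathological profile (a symmetric collection of ordered top-$k$ lists, completed by placing $x_1$ last and $x_2$ in position $k+1$ whenever they are unranked) so that $x_1$ wins under $Copeland_k$ while becoming a Condorcet loser with Copeland score $0$ in the full profile, against $x_2$ with score $m-1$, forcing the ratio to $\infty$. The only cosmetic difference is that the paper adds two extra votes $x_1 x_2 \ldots x_k$ so that $x_1$ wins $Copeland_k$ outright rather than via tie-breaking, as in your version.
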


\begin{proof}
Let $P$ be the following profile:
\begin{itemize}
\item $P_k$ contains two votes $x_1 x_2 \ldots x_k$, and one vote $L$ for each ordered list of $k$ candidates among $m$.
\item $P$ is obtained by completing $P_k$ by adding $x_1$ (resp. $x_2$) in last position (resp. in position $k+1$) when it is not in the $top$-$k$ positions.
\end{itemize}

In $P_k$, the winner for $Copeland_k$ is $x_1$. In $P$, the Copeland winner is $x_2$. Now, with respect to $P$, the Copeland score of $x_1$ (resp. $x_2$) is $0$ (resp. $m-1$), hence the result. \qedhere
\end{proof}

%

The obtained worst-case bounds are rather negative: very negative for Copeland and maximin, less so for Borda, and even less so for Harmonic.\footnote{As Ranked Pairs is not based on scores, it was not studied here.} However, the maximin and Copeland scores make less sense as a measure of social welfare than positional scores. 

Now, we may wonder whether these worst cases do occur frequently in practice or if they correspond to rare pathological profiles. The next two subsections show that the latter is the case.
\subsection{Average Case Evaluation}\label{AppRatioPractice}

We present the evaluation of the approximation ratio using data generated from Mallows $\phi$ model. For each experiment, we draw 10000 random profiles, with $m = 7$, $n=15$, and let $\phi$ vary.

Fig.~\ref{RatioMallows} shows results reflecting the approximation ratio for truncated rules when using Mallows model. 
Our results suggest that, 
in practice, results are much better than in the worst case where best results are obtained by Harmonic, followed by Borda and finally Maximin.  

\begin{figure}[h]
\centering
    \includegraphics[width=11cm]{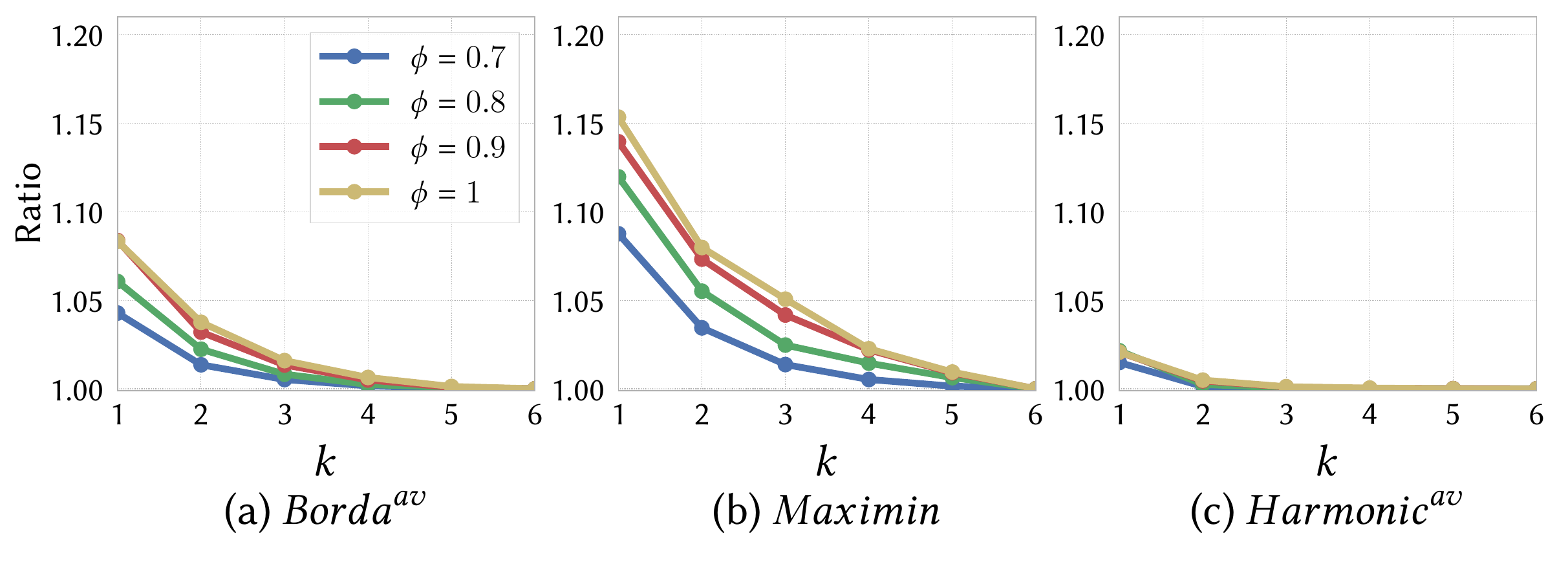}
	\caption{Mallows model: approximation ratio when $n=15$, $m=7$ and varying $\phi$. } 
\label{RatioMallows}
\end{figure}

\subsection{Real Data Sets}
Again we consider 2002 Dublin North data ($m=12,n=3662$) with samples of $n^*$ voters among $n$ ($n^{*}<n$) where $n^{*}=\{15,100\}$. In each experiment 1000 random profiles are constructed with $n^*$ voters; then we consider the top-k ballots obtained from these profiles with $k=\{1,\dots,m-1\}$.
Again, the results are very positive.
\vspace{0.5cm}
\begin{figure}[h]
\centering
    \includegraphics[width=10cm]{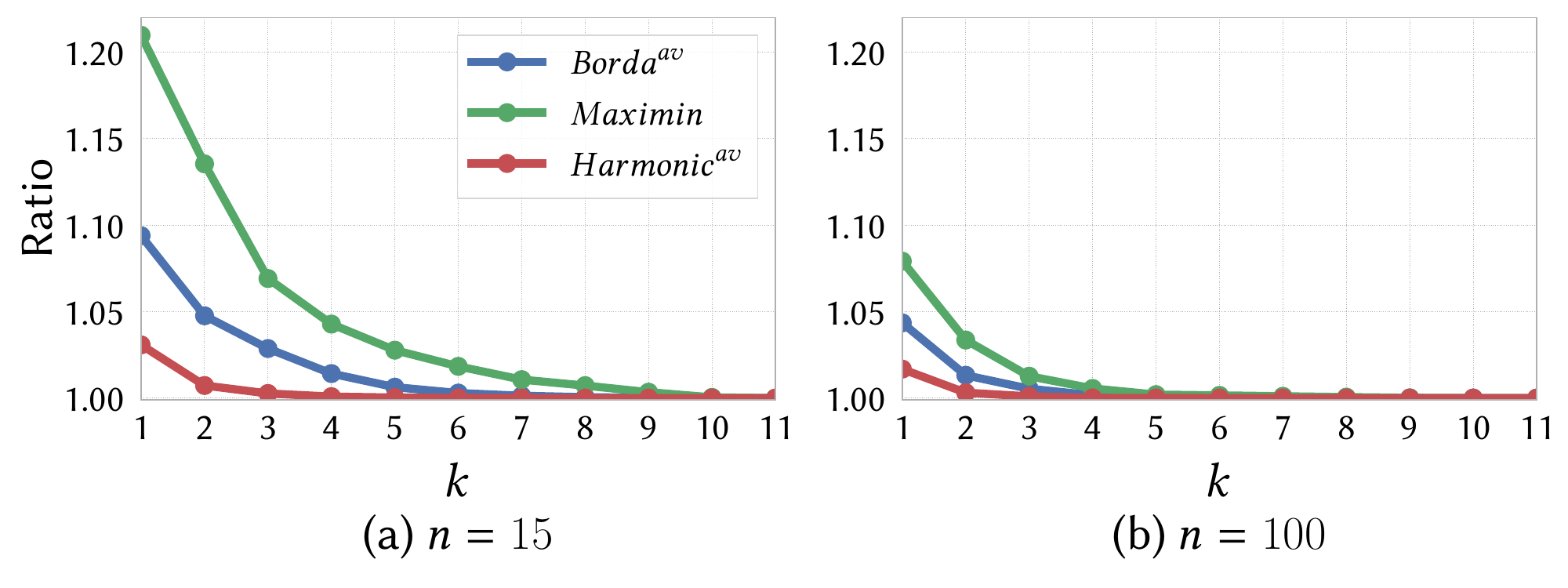}
	\caption{Approximation ratio with Dublin North data set.} 
\label{RatioReal}
\end{figure}
\section{Conclusion}
In this paper we have considered $k$-truncated approximations of rules which take only $top$-$k$ ballots as input where we have considered two measures of the quality of the approximation: the probability of selecting the same winner as the original rule, and the score ratio.
For the former, our empirical study show that a very small $k$ suffices.
For the latter, while the theoretical bounds are, at best; moderately encouraging, our experiments show that in practice the approximation ratio is much better than in the worst case: our results 
suggest that a very small value of $k$ works very well in practice.
Many issues remain open. Especially, it would be interesting to consider top-$k$ approximations as voting rules on their own, and to study their normative properties.

\bibliographystyle{splncs04}

\end{document}